\newtheorem{theorem}{Theorem}
\newtheorem{lemma}{Lemma}
\def\D{\mathcal{D}}
\def\H{\mathcal{H}}
\def\S{\mathcal{S}}
\def\afiliacion{\textit{Facultad de Matem\'atica, Astronom\'ia, F\'isica y Computaci\'on, Universidad Nacional de C\'{o}rdoba, C\'ordoba and CONICET, Argentina}}
\begin{document}

\title{Quantum imaginary time evolution and UD-MIS problem}%

\author{Victor A. Penas}
\email{vapenas@unrn.edu.ar}

\affiliation{\afiliacion}

\author{Marcelo Losada}
\affiliation{\afiliacion}

\author{Pedro W. Lamberti}
\affiliation{\afiliacion}

\date{\today}

\begin{abstract}
In this work we apply a procedure based on the quantum imaginary time evolution method to solve the unit-disk maximum independent set problem.
Numerical simulations are performed for instances of 6-, 8- and 10-qubit graphs. We find that the failure probability of the procedure is relatively small and rapidly decreases with the number of shots. In addition, a theoretical upper bound for the failure probability of the procedure is obtained.

\end{abstract}

\maketitle

\section{Introduction}

Obtaining the  Hamiltonian ground state of a quantum system is of utmost importance for various physics problems and for optimization problems.
Quantum computing may be useful for calculating these states. 
Several quantum algorithms have been proposed for this purpose, including adiabatic quantum optimization \cite{Farhi_2001}, quantum annealing \cite{Das_2008}, and classical quantum variational algorithms such as the quantum approximate optimization algorithm (QAOA) \cite{farhi2014quantumapproximateoptimizationalgorithm, farhi2014quantumapproximateoptimizationalgorithm,Zhou_2020,ostrowski2020lowerboundscircuitdepth,Shaydulin_2021,Medvidovi__2021,PhysRevResearch.4.033029,herrman2021multianglequantumapproximateoptimization}
and the variational quantum eigensolver (VQE) \cite{Peruzzo_2014, Cerezo_2021, Tilly_2022}. Despite many advances, these algorithms also have potential drawbacks, especially in the context of near-term quantum computing architectures with limited quantum resources. For example,
adiabatic quantum optimization may only provide polynomial speedups \cite{Lucas:2013ahy, Abbas_2024}, while variational quantum algorithms are limited in accuracy by a fixed ansatz and involve high-dimensional classical optimizations \cite{McClean_2018}.

Another approach to finding ground states,
based on  quantum imaginary time evolution (QITE), has been proposed by Motta et al. in \cite{Motta_2019}. This approach emulates  classical imaginary time evolution with measurement-assisted unitary circuits. Evolution in imaginary time evolves the system to zero temperature \cite{Love_2020}, returning the ground state. QITE has found practical application in quantum chemistry on 
noisy intermediate-scale quantum (NISQ) hardware \cite{Barison_2022, Yeter-Aydeniz_2020,Tsuchimochi_2023}, in simulating open quantum systems \cite{Kamakari_2022}, thermodynamic observables \cite{Getelina_2023,Gacon_2024} and recently in optimization problems such as Max-Cut and 
polynomial unconstrained Boolean optimization (PUBO) \cite{alam2023solvingmaxcutquantumimaginary,PhysRevA.109.052430}. Similar to the QAOA, QITE involves variational parameters to be optimized, but unlike the QAOA, the parameters are fixed by algebraic equations \cite{alam2023solvingmaxcutquantumimaginary}. QITE seems to have an advantage \cite{Motta_2019} over variational quantum algorithms since usually these methods require high-dimensional classical optimizations.

The MaxCut and the Maximum Independent Set problems are prototypical examples of optimization problems that have received attention as candidates for quantum advantage \cite{Guerreschi_2019,otterbach2017unsupervisedmachinelearninghybrid,pichler2018quantumoptimizationmaximumindependent,Henriet_2020,Ebadi_2022,PhysRevResearch.5.043277,Lykov_2023,Yeo_2024,PhysRevA.108.052423,Henriet_2020_review,Kim2023}. The unit-disk maximum set problem (UD-MIS) is computationally challenging. It belongs to the class of NP-hard problems, which means that finding an exact solution efficiently for large instances is infeasible. 
Research on quantum hardware and quantum algorithms has been addressed in \cite{Labuhn_2016, Barredo_2018, Serret_2020, pichler2018quantumoptimizationmaximumindependent} to tackle UD-MIS problems. For instance, in \cite{Serret_2020} quantitative requirements on system sizes and noise levels of Rydberg atoms platforms were studied to reach quantum advantage in solving UD-MIS problem with a quantum annealing-based algorithm.

In this work, we apply a procedure based on QITE algorithm proposed in \cite{Motta_2019} to solve UD-MIS problems. We simulate, without noise, QITE with two non-trivial domains and apply it to several UD-MIS instances of 6, 8 and 10 qubits. We explore the number of iterations and shots required for our probabilistic procedure to return acceptable results.

The paper is organized as follows. In Section \ref{sec:qite} we review the quantum imaginary time evolution method presented in \cite{Motta_2019}. We provide some notations and definitions that will be useful later on. 
In Section \ref{sec:optimization-based-on-qite} we describe our proposed method for solving  optimization problems using QITE. 
In Section \ref{sec:udmis} we briefly  describe the UD-MIS problem and we explain how this problem can be formulated in terms of the Hamiltonian
of a quantum system.
In Section \ref{sec:numeric-results} we numerically test
the proposed method on several instances of UD-MIS problems.
In Section \ref{sec:conclusions}, we present some concluding remarks. For readability, an auxiliary lemma, proofs and some explanations of numerical calculations are presented separately in Appendices~\ref{app:aux_proofs}  and~\ref{app:numerical}, respectively.

\section{Quantum imaginary time evolution method}
\label{sec:qite}

In \cite{Motta_2019}, Mota et al. proposed an algorithm for finding ground states, based on  quantum imaginary time evolution. This approach emulates  classical imaginary time evolution 
with measurement-assisted unitary circuits. Some improvements to the QITE algorithm have been developed for NISQ hardware \cite{Nishi_2021,Ville_2022,Cao_2022, Benedetti_2021,Gacon_2023,Gacon_2024}.  
Other approaches to QITE have been proposed. For instance, Monte Carlo quantum simulations \cite{Huo_2023}, techniques with reinforcement learning \cite{Cao_2022}, probabilistic methods of implementing non-unitary operations \cite{Liu_2021,Lin_2021,PhysRevResearch.4.033121}, alternative approaches where an orthogonal basis set is used in each propagated state \cite{Jouzdani_2022},  and variational versions  \cite{McArdle_2019, Beach_2019,Gacon_2023,Gacon_2024,Kolotouros_2025}. In \cite{Kolotouros_2025} the authors improved upon a variational version of quantum imaginary time evolution \cite{McArdle_2019} by estimating quantum Fisher information matrix using random measurements and considering the average classical Fisher information matrix. 
In this section we review the quantum imaginary time evolution method proposed in \cite{Motta_2019}, which is the method used throughout this work. 

The imaginary time evolution method is a well-known approach used for obtaining the ground state of a quantum Hamiltonian. 
 The idea is to express the ground-state $\ket{\psi}$ of the Hamiltonian as the long-time limit of the imaginary time Schr\"odinger equation, that is, 
 
 \begin{equation}\label{eq:generic-ite}
 \ket{\psi} = \lim_{t \rightarrow \infty} \ket{\psi^{ite}_t}, ~~~ \ket{\psi^{ite}_t} = \frac{ e^{-tH}\ket{\psi_0} } { ||  e^{-tH}\ket{\psi_0}  || },
 \end{equation}
 for some initial state $|\psi_0\rangle$, with the condition $\langle \psi_0|\psi\rangle \neq 0$. The parameter $t>0$ must be chosen such that the final state is close enough to the ground state.
This parameter can also be thought of as the inverse temperature ($t = \frac{1}{k_B T}$, where $k_B$ is the Boltzmann constant and $T$ is the temperature). In this case, Eq. \eqref{eq:generic-ite} states that the system tends to the ground state when temperature tends to zero. 
 
In \cite{Motta_2019} it was proposed a quantum imaginary time evolution method (QITE) that emulates Eq. \eqref{eq:generic-ite}. It consists of measurement-assisted unitary circuits, acting on suitable domains around the support of different qubits. To be more precise, start with a geometric $k$-local Hamiltonian with $m$ terms of the form:
\begin{equation}\label{eq:sum-hm}
    H = \sum^{m}_{l= 1} h[l].
\end{equation}
Each term $h[l]$ acts on at most $k$ (neighboring) qubits on an underlying graph. 
Let us consider 
a Trotter decomposition \cite{Trotter_1959,Suzuki_1991} of the corresponding imaginary
time evolution 
\begin{equation}\label{eq:trotter-decomposition}
    e^{-tH} =  (e^{- \tau h[1]}e^{- \tau h[2]}\cdots e^{- \tau h[m]})^n + \mathcal{O}(\tau^2), ~~~ \tau =t/ n,
\end{equation}
acting on an initial state $\ket{\psi_0}$ (usually taken to be a product state). In the decomposition we have $n$ iteration of the form $e^{- \tau h[1]}e^{- \tau h[2]}\cdots e^{- \tau h[m]}$, and each $e^{-\tau h[l]}$ is a sub-step of a complete iteration step. The interval $\tau > 0$ must be chosen such that the squared errors are negligible. 

After a Trotter sub-step $e^{-\tau h[l]}$ of the iteration $j$ we have
\begin{equation}
    |\psi'\rangle=\frac{e^{-\tau h[l]}}{||e^{-\tau h[l]}|\psi\rangle ||}|\psi\rangle.
\end{equation}

The idea is to map the scaled non-unitary action $e^{-\tau h[l]}$ of the iteration $j$ on the state $\ket{\psi}$ to that of a unitary evolution $e^{-i\tau A[l,j]}$:
\begin{equation}\label{eq:qite-idea}
    |\psi'\rangle \approx e^{-i\tau A[l,j]}|\psi\rangle.
\end{equation}
Here $A[l,j]$ is a Hermitian operator associated with the $l$-th term $h[l]$ of iteration $j$ acting on a qubit domain $D_{l,j}$ of size  $d_{l,j}$, where $D_{l,j}$ is the set of qubits where $A[l,j]$ acts non-trivially. The domain is usually chosen around the support of $h[l]$.  $A[l,j]$ can be expanded as a sum of Pauli strings acting on $D_{l,j}$, 
\begin{equation}\label{eq:hermitian-A}
    A[l,j]=\sum_Ia[l,j]_I\sigma_I,~~I=i_1 i_2\ldots i_{d_{l,j}},
\end{equation}
with $\sigma_I=\sigma_{i_1}\otimes\cdots\otimes\sigma_{i_{d_{l,j}}}$, and    $\sigma_{i_k}$ being a Pauli matrix with $i_k \in \{ I, X, Y, Z \}$, and it is understood that identity matrices should be inserted in the product when appropriate. 

The goal is to minimize the difference $|| \ket{\psi'}-(1-i\tau A[l,j])\ket{\psi} ||^2$ with respect to real variations of $a[l,j]_I$, where $|| \cdot  ||$ is the norm of the  Hilbert space. Up to $\mathcal{O}(\tau^2)$ errors, this difference translates into a linear problem in the coefficients $a[l,j]_I$:
\begin{equation}
\label{eq:linear_system}
    \sum_{J}(S + S^T)_{IJ} a_J[l,j] = -b_I,
\end{equation}
where $S_{IJ}=\langle \psi|\sigma_I\sigma_J |\psi \rangle$ and $b_I=-2\mbox{Im}\langle \psi|\sigma_Ih[l] |\psi \rangle$. 
In order to obtain $S_{IJ}$ and $b_I$, we need to perform suitable measurements of the state $|\psi \rangle$.
Solving the linear equation \eqref{eq:linear_system}, we obtain the coefficients $a[l,j]$. From $a[l,j]$ we obtain the unitary evolution $e^{-i\tau A[l,j]}$ for the corresponding sub-step. 
Finally, iterating this process for the $m$ terms of the Hamiltonian and the $n$ iteration steps, 
we construct the QITE operator:
\begin{equation}
 \mbox{Q}_H(\tau, n, \mathcal{D})= \prod_{j=1}^{n}\prod_{ l=1}^{ m} e^{-i\tau A[l,n +1-j]}.
\end{equation}
Here $\mathcal{D}$ contains the information of all domains $D_{l,j}$, with $1\le l \le n$ and $1\le j \le m$.
Finally, the QITE evolution is given by
\begin{equation}
|\psi_0\rangle\rightarrow \ket {\phi^{qite}_t } = \mbox{Q}_H(\tau, n, \mathcal{D})|\psi_0\rangle.
\end{equation}
For simplicity, in the Section \ref{sec:numeric-results} we will fix all domains to be equal on each step (but not on each sub-step), that is, for each $1\leq l \leq m$, 
$D_{l,1} =\cdots= D_{l,n}$.

In constructing the  QITE operator we have to define for each sub-step the domain of qubits where the Pauli strings act on.
In \cite{Motta_2019} it is argued that as the number of sub-steps increase, the size of the domain should increase from an initial domain that involves the qubits associated with the natural support of each $h[l]$. 
In general, this is due to an expected increase in correlations between qubits when starting from a product state. However, it is argued that for systems with finite correlations over at most $C$ qubits bounded by $\mbox{exp}(-L/C)$, with $L$ the Manhattan distance between the observables on the lattice, the domain 
can be chosen with a width of at most $\mathcal{O}(C^D)$ surrounding the qubits on the support of $h[l]$ ($D$ the dimension of the underlying regular lattice). They also argue that the number of measurements and classical storage at a given time step required to perform the QITE computation is bounded by $\mathcal{O}(\mbox{exp}(C^D))$, but it scales in a quasi-polynomial way in terms of the numbers of qubits. 
In practice, however, we can choose a fixed domain smaller than the one induced by $C$ by truncating the unitary updates on each step to domain sizes that fit the computational budget. Of course, the larger the size the better the approximation to the ground-state. Nevertheless, this approximate version of QITE remains useful as a valid heuristic and we use it in this work.

\section{Optimization method based on QITE}\label{sec:optimization-based-on-qite}

Let us consider a quantum system of $N$ qubits, with Hilbert space $\mathcal{H} = \mathbb{C}^2 \otimes \cdots \otimes \mathbb{C}^2$, and dimension $d = 2^N$. We consider a diagonal Hamiltonian in the computational basis $\{ |i\rangle \}_{0 \leq i \leq d-1}$, given by 
\begin{equation}
\label{eq:H_optimization}
H = a I +  \sum_{i= 0}^{N-1} b_i Z_i + \sum_{(i, i')}^{N-1} c_{i,i'} Z_i Z_{i'}, 
\end{equation}
with $I$ the identity operator, $Z_i$ the $Z$ Pauli matrix acting on the i-th qubit, and $a$, $b_i$, $c_{i,i'}$ real coefficients.
We denote $E_0, \ldots , E_{d-1}$ the eigenvalues of the Hamiltonian $H$, with all the $E_i$ sorted in an non-decreasing way (each eigenvalue has to be considered with their respective degeneracy). We denote the corresponding eigenstates respectively as $| E_0 \rangle, \ldots, | E_{d-1} \rangle$, where some $|E_i\rangle$ should be understood as belonging to the same eigenspace. 
Since the Hamiltonians are
diagonals in the computational basis, that is, 
the eigenvectors of $H$ are the vectors of the computational basis, the basis $\{|E_i\rangle \}_{0 \leq i \leq d-1}$ can be chosen such that it has the same elements of the computational basis, but in different order. In this work, the basis $\{|E_i\rangle \}_{0 \leq i \leq d-1}$ is chosen in such a way.

The problem we want to solve is the following: 
\textit{Finding the lowest eigenvalue $E_0$ of $H$ or an eigenvalue 
$E_i$ such that $E_i \leq E_0 + \delta E$, with $\delta E \geq 0$ a tolerable error.}

In order to solve this problem, we propose the following probabilistic method (see Alg.\ref{alg:cap}) based on QITE:

\begin{enumerate}
    \item We start with an initial state  $|\psi_0\rangle =\mathbf{H}^{\otimes N}|0 \ldots 0\rangle$, where $\mathbf{H}^{\otimes N}$ is the Hadamard gate acting on each qubit. In terms of the eigenvectors basis of $H$, the initial state has the form $\ket{\psi_0} = \frac{1}{\sqrt{d}}\sum_{i = 0}^{d-1} \ket{E_i}$.

    \item Then, we apply the QITE operator up to a time $t_{max}$:
\begin{equation}\label{eq:qite-operator}
    |\psi_0\rangle\rightarrow \ket {\phi^{qite}_{t_{max}} } =  \mbox{Q}_H(\tau, n_{max}, \mathcal{D})|\psi_0\rangle,
\end{equation}
where $t_{max}$, $\tau$, and $\mathcal{D}$ should be chosen according to the particular problem, and $n_{max} = t_{max} /\tau$.
    
 \item We measure $|\phi^{qite}_t\rangle$ $M$ times ($M$ number of shots) in the computational basis, with  $M << d$. We register the $M$ outputs $E_{i_m}$, with $1 \leq m \leq M$.

 \item Finally, with a classical computer, we choose from the
 $M$ outputs $E_{i_m}$ the one with less energy.  
 
\end{enumerate}

The general idea of the method is the following.
When applying QITE operator $\mbox{Q}_H(\tau, n_{max}, \mathcal{D})$ to the initial state $\ket{\psi_0}$, as $n_{max}$ grows, it is expected to obtain a superposition of eigenstates of $H$ with higher probabilities for the eigenstates with the smallest eigenvalues. 
In the case of Hamiltonians of the form given in Eq. \eqref{eq:H_optimization}, the Hamiltonian basis coincides with the computational basis. 
Therefore, when measuring in the computational basis, we obtain eigenstates of $H$, and we expect to obtain with more probability the eigenstates with lower eigenvalues.
If we consider $M$ shots,  the more we increase $M$, the more likely we are to get a good result, that is, a state with eigenvalue $E_i$ such that $E_i \leq E_0 + \delta E$ with $\delta E$ a tolerable error.  However, the number of shots should not increase exponentially with the number of qubits, otherwise  in the fourth step of the procedure we end up in a $for$ loop with an exponential number of iterations. 
In Section \ref{sec:numeric-results}, we  numerically show for 6-, 8- and 10-qubits graphs that the number of shots $M$ do not need to be drastically increased  with the number of qubits.

It should be noted that the state obtained by our method and the QITE state $|\phi^{qite}_t\rangle$ are not necessarily the same. While the first one is an eigenstate of $H$, obtained after a measurement, the second one, in general, is a superposition of eigenstates.   When this output state is such that $E_i \leq E_0 + \delta E$ with $\delta E$ a tolerable error, we call it an \textit{acceptable state}.

Since the proposed method is probabilistic, there is a  failure probability associated with it, and the lower it is, the higher the chances of obtaining at least one acceptable eigenvalue of $H$ after $M$ shots.
Given the QITE state at time $t$,  $\ket{\phi^{qite}_t}$, when measuring one time ($M=1$) in the computational basis, the \textit{failure probability} is the probability of obtaining an eigenvalue $E_i > E_0+\delta E$, which is given by 
\begin{equation}\label{eq:qite-failure-probality}
        P_F^{qite}(t)= \sum_{E_i > E_0+\delta E }
        | \langle E_i \ket{\phi^{qite}_t} | ^2.   
\end{equation}
When measuring $M$ times in the computational basis, the failure probability for the QITE state at time $t$ is $(P_F^{qite}(t))^M$.

In the ideal case, the QITE state at time $t$, $\ket{\phi^{qite}_t}$, closely approximates the ITE state at time $t$, $\ket{\psi^{ite}_t}$, given by Eq. \eqref{eq:generic-ite}. The ITE state sometimes will be called the exact state in Section \ref{sec:numeric-results}. It can be expressed in terms of the eigenvectors of the Hamiltonian as follows:
\begin{equation}
\ket{\psi^{ite}_t}= \frac{1}{\mathcal{K}}\sum_{i = 0}^{d-1} e^{-t E_i} | E_i\rangle, ~~~ \mbox{with} ~~~ \mathcal{K}^2  = \sum_{i = 0}^{d-1} e^{-2 t E_i}.
\end{equation}
We can compute the failure probability for the ITE state: 
\begin{equation}\label{eq:ite-failure-probality}
        P_F^{ite }(t)=\sum_{E_i > E_0+\delta E}
        | \langle E_i \ket{\psi^{ite}_t} | ^2 = \sum_{E_i > E_0+\delta E  } \frac{e^{- 2 t E_i}}{\mathcal{K}^2}.
\end{equation}

We are interested in an upper bound for the failure probability independent of the Hamiltonian $H$. The following theorem provides a relevant bound for the ITE failure probability.

\begin{theorem}[\textbf{ITE failure probability upper bound}]\label{th: cota_ite}
    Let $H = \sum_{i= 0}^{d-1} E_i \ket{E_i}\bra{E_i}$ be a Hamiltonian of a quantum system with the fundamental eigenvalue 
    with degeneracy $g$,  $\delta E \geq 0$ a tolerance, and $t > 0$ a time value. The failure probability for $\ket{\psi^{ite}_t}$ satisfies the inequality
    \begin{equation} \label{eq:ite_bound}
    P_F^{ite}(t) \leq \frac{ 1}{ 1 +g (d-g)^{-1}e^{2t\delta E}  }.  
    \end{equation}    
    
\end{theorem}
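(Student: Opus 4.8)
The plan is to split the normalization $\mathcal{K}^2$ into a ``success'' weight and a ``failure'' weight and then to bound their ratio. Define $F = \sum_{E_i > E_0+\delta E} e^{-2tE_i}$ and $S = \sum_{E_i \leq E_0+\delta E} e^{-2tE_i}$, so that $\mathcal{K}^2 = S+F$. With this notation Eq.~\eqref{eq:ite-failure-probality} reads $P_F^{ite}(t) = F/(S+F) = 1/(1+S/F)$. Since the map $x \mapsto 1/(1+x)$ is decreasing on $x \geq 0$, it suffices to produce a lower bound of the form $S/F \geq g(d-g)^{-1} e^{2t\delta E}$; substituting it into the expression for $P_F^{ite}(t)$ immediately yields the claimed inequality~\eqref{eq:ite_bound}.

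To lower bound $S$, I would keep only the contribution of the ground eigenspace. Because $\delta E \geq 0$, all $g$ degenerate ground states satisfy $E_i = E_0 \leq E_0 + \delta E$ and hence lie in the success set, giving $S \geq g\,e^{-2tE_0}$. To upper bound $F$, note that every index in the failure set obeys $E_i > E_0 + \delta E$, so term by term $e^{-2tE_i} \leq e^{-2t(E_0+\delta E)}$; moreover, since the $g$ ground states are excluded from the failure set, that set contains at most $d-g$ indices, whence $F \leq (d-g)\,e^{-2t(E_0+\delta E)}$. Combining the two estimates gives $S/F \geq g e^{-2tE_0}/[(d-g)e^{-2t(E_0+\delta E)}] = g(d-g)^{-1} e^{2t\delta E}$, which is exactly the required bound.

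The computation itself is short, so the only points needing care are the bookkeeping edge cases rather than any genuine analytic obstacle. When $g = d$ the failure set is empty and $P_F^{ite}(t) = 0$, so the statement holds trivially (with the right-hand side read as its $F \to 0$ limit); likewise, whenever $F = 0$ the ratio $S/F$ is infinite and the bound holds with $P_F^{ite}(t)=0$. Assuming $g < d$ and $F > 0$, the manipulations above are valid for all $t \geq 0$, and the constant $g(d-g)^{-1}$ is produced by the simultaneous use of the counting estimate $|\{i : E_i > E_0 + \delta E\}| \leq d-g$ and the term-wise exponential bound at the threshold $E_0 + \delta E$. I expect the main subtlety to be simply verifying that these two estimates are applied on the correct index sets, so that the success and failure weights partition the sum exactly.
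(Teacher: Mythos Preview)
Your argument is correct and follows essentially the same approach as the paper: both split $\mathcal{K}^2$ into success and failure contributions, upper bound the failure weight by $(d-g)e^{-2t(E_0+\delta E)}$, and lower bound the success weight using the ground-state terms. The only cosmetic difference is that the paper keeps track of the intermediate eigenvalues $E_0 < E_i \leq E_0+\delta E$ separately (bounding their contribution by $(r-g)e^{-2t\delta E}$ and invoking monotonicity of $x/(A+x)$) before relaxing $d-r\leq d-g$, whereas you discard those terms from the outset via $S\geq g\,e^{-2tE_0}$; your route is slightly shorter and yields the identical final bound.
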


The proof of this theorem is presented in Appendix \ref{app:aux_proofs}. From this result we can obtain useful relations between the parameters of the proposed method. Eq. \eqref{eq:ite_bound} can be restated as $  P_F^{ite}(t) \leq 1/( 1 + g e^{2t\delta E - N \ln{2}})$.
When measuring $M$ times,  this expression has an exponent $M$. Then, it is enough that the parameters satisfy $2t\delta E \gtrapprox  N \ln{2}$ in order to obtain a good upper bound. 
This implies that time need not increase more than linearly with the number of qubits.

The ITE upper bound can be connected with the QITE upper bound. The following theorem provides a relation between the ITE failure probability and the QITE failure probability.

\begin{theorem}[\textbf{QITE failure probability upper bound}]\label{th: cota_qite}
    
     Let $H = \sum_{i= 0}^{d-1} E_i \ket{E_i}\bra{E_i}$ be a Hamiltonian of a quantum system, 
     $\delta E \geq 0$ a tolerance, and $ \ket {\psi^{ite}_t}$ and $ \ket {\phi^{qite}_t}$ the corresponding ITE and QITE states at time $t>0$. 
     Given $0 \leq  \varepsilon \leq \sqrt{2}$, if $||\ket {\psi^{ite}_t} -  \ket{\phi^{qite}_t}|| \leq \varepsilon$, the failure probabilities $P_F^{ite}(t)$ and $P_F^{qite}(t)$ satisfy 
    \begin{equation}\label{eq:qite-upper-bound}
        |P_F^{qite}(t)- P_F^{ite}(t)| \leq   \varepsilon \sqrt{1 - \frac{\varepsilon^2}{4} }.
    \end{equation}

\end{theorem}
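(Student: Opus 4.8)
The plan is to recognize that both failure probabilities are expectation values of one and the same projector, and then to control their difference by the trace distance between the two pure states, which the norm hypothesis bounds in terms of $\varepsilon$.

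First I would introduce the orthogonal projector onto the ``failing'' subspace,
\begin{equation}
  \Pi = \sum_{E_i > E_0 + \delta E} \ket{E_i}\bra{E_i},
\end{equation}
so that $0 \le \Pi \le I$ and, by Eqs.~\eqref{eq:qite-failure-probality} and~\eqref{eq:ite-failure-probality}, one has $P_F^{qite}(t) = \bra{\phi^{qite}_t}\Pi\ket{\phi^{qite}_t}$ and $P_F^{ite}(t) = \bra{\psi^{ite}_t}\Pi\ket{\psi^{ite}_t}$. The whole problem is thus reduced to bounding $|\Tr(\Pi(\rho-\sigma))|$ with $\rho = \ket{\phi^{qite}_t}\bra{\phi^{qite}_t}$ and $\sigma = \ket{\psi^{ite}_t}\bra{\psi^{ite}_t}$. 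Next I would invoke the operational characterization of the trace distance: for any operator $0 \le \Pi \le I$ and any two states, $|\Tr(\Pi\rho) - \Tr(\Pi\sigma)| \le D(\rho,\sigma) = \tfrac12\|\rho-\sigma\|_1$ (exactly the kind of standard fact I would isolate as an auxiliary lemma in Appendix~\ref{app:aux_proofs}). Specializing to pure states, the trace distance has the closed form $D(\rho,\sigma) = \sqrt{1 - |\braket{\psi^{ite}_t | \phi^{qite}_t}|^2}$, so at this stage
\begin{equation}
  |P_F^{qite}(t) - P_F^{ite}(t)| \le \sqrt{1 - |\braket{\psi^{ite}_t | \phi^{qite}_t}|^2}.
\end{equation}

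It then remains to translate the distance hypothesis into a lower bound on the overlap. Expanding $\|\ket{\psi^{ite}_t} - \ket{\phi^{qite}_t}\|^2 = 2 - 2\,\mathrm{Re}\braket{\psi^{ite}_t | \phi^{qite}_t} \le \varepsilon^2$ gives $\mathrm{Re}\braket{\psi^{ite}_t | \phi^{qite}_t} \ge 1 - \varepsilon^2/2$, hence $|\braket{\psi^{ite}_t | \phi^{qite}_t}| \ge 1 - \varepsilon^2/2$. Here is where the assumption $\varepsilon \le \sqrt2$ does its work: it guarantees $1 - \varepsilon^2/2 \ge 0$, so I may square without reversing the inequality, obtaining $|\braket{\psi^{ite}_t | \phi^{qite}_t}|^2 \ge (1-\varepsilon^2/2)^2$. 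Substituting and simplifying $1 - (1-\varepsilon^2/2)^2 = \varepsilon^2(1 - \varepsilon^2/4)$ then yields the claimed bound $\varepsilon\sqrt{1 - \varepsilon^2/4}$.

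I expect the only genuine subtlety to be the sharp constant. A naive route --- writing the difference as $\bra{\phi}\Pi\ket{\phi-\psi} + \braket{\phi - \psi | \Pi | \psi}$ and applying Cauchy--Schwarz --- only delivers a bound of order $2\varepsilon$, which is both weaker and fails to reproduce the correct saturation $\to 1$ as $\varepsilon \to \sqrt2$ (where both probabilities live in $[0,1]$). Passing through the trace distance --- equivalently, a H\"older/variational estimate that is tight for the projector onto the positive part of $\rho - \sigma$ --- is precisely what produces the exact factor $\sqrt{1 - \varepsilon^2/4}$; the pure-state distance formula and the overlap estimate are then routine.
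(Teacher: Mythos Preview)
Your proof is correct and reaches exactly the same bound, but by a different route than the paper. The paper decomposes $\H$ into the ``acceptable'' and ``failing'' subspaces $\S_1 \oplus \S_2$, parametrizes the two states by angles ($P_F^{ite}=\sin^2\alpha$, $P_F^{qite}=\sin^2\beta$), proves an auxiliary Lemma showing that $\|\ket{\psi}-\ket{\phi}\|\le\varepsilon$ forces $|\alpha-\beta|\le 2\arcsin(\varepsilon/2)$, and finishes with the trigonometric identity $|\sin^2\beta-\sin^2\alpha|=|\sin(\alpha+\beta)\sin(\alpha-\beta)|\le\sin|\alpha-\beta|$. Your argument instead recognizes both failure probabilities as expectations of a single projector and invokes the operational bound $|\Tr\Pi(\rho-\sigma)|\le\tfrac12\|\rho-\sigma\|_1=\sqrt{1-|\braket{\psi|\phi}|^2}$, then converts the norm hypothesis into an overlap lower bound. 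The two are essentially dual: writing $|\braket{\psi|\phi}|=\cos\gamma$, the trace distance is $\sin\gamma$ and your overlap estimate is precisely the paper's angle bound in disguise. What you gain is brevity and generality (the trace-distance step works for any $0\le\Pi\le I$ and even mixed states, and cites a standard quantum-information fact rather than proving a bespoke lemma); what the paper's version gains is self-containment, since it never assumes familiarity with the trace norm or its variational characterization.
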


The proof of this theorem is presented in Appendix \ref{app:aux_proofs}. Motivated by this theorem we define the following quantities that will be analyzed later on:

\begin{equation}\label{eq:varepsilon}
    \varepsilon(t)=||\ket {\psi^{ite}_t} -  \ket{\phi^{qite}_t}||,
\end{equation}
\begin{equation}\label{eq:Barepsilon}
\Bar{\varepsilon}(t) = ||\ket {\psi^{ite}_{t_{max}}} -  \ket{\chi_t}||,
\end{equation}
where $\ket{\chi_t}$ could be either $\ket{\psi^{ite}_{t}}$ or $\ket{\phi^{qite}_{t}}$. The quantity $\varepsilon (t)$ tells us how much the QITE state departs from the ITE state at time $t$, and  $\Bar{\varepsilon}(t)$ tells how much the corresponding state at time $t$ departs from the ITE state at a final time $t_{max}$. For an iterative process, $\Bar{\varepsilon}(t)$ shows how well $\ket{\chi_t}$ is approximating the expected  solution $\ket {\psi^{ite}_{t_{max}}}$ at a certain iteration step.

Thm. \ref{th: cota_qite} provides an interesting upper bound for the QITE failure probability when $\varepsilon(t) \leq \sqrt{2}$:
\begin{equation}  P_F^{qite}(t)   \leq P_F^{ite}(t) +    \varepsilon \sqrt{1 - \frac{\varepsilon^2}{4} }.
\end{equation}
Moreover, combining with ITE failure probability upper bound, we 
obtain the following inequality: 
\begin{equation}
P_F^{qite}(t)   \leq \frac{ 1}{ 1 +g (d-g)^{-1}e^{2t\delta E}  } +    \varepsilon \sqrt{1 - \frac{\varepsilon^2}{4} }.
\end{equation}
The distance $\varepsilon(t) \leq \sqrt{2}$ depends on several factors, such as the parameter $t$, the chosen domain $\D$ and the length of the step $\tau$. In Section \ref{sec:numeric-results} and Appendix \ref{app:numerical}, we analyze how distance $\varepsilon(t)$ behaves for different numbers of qubits, iterations, and domains.

\begin{algorithm}[H]
\begin{algorithmic}
\caption{}\label{alg:cap}
\State $|\phi_0\rangle\gets \mathbf{H}^{\otimes N}|0\ldots0\rangle$ \Comment{Initial state}
\State $t \gets t_{max}$\Comment{$t_{max}=\tau \times n_{max}$, with $n_{max}$ iterations.}
\State $H \gets \mbox{Optimization Hamiltonian}$
\State $|\phi^{qite}_{t_{max}}\rangle \gets \mbox{Q}_H(\tau, n_{max}, \mathcal{D})|\phi_0\rangle$
\State $\mbox{Measure } |\phi^{qite}_{t_{max}}\rangle~ 
  M \mbox{ times} ~(\mbox{computational basis}).~~\mbox{Keep each measured state:}~~|E_{i_m}\rangle,~~1\le m \le M.$\Comment{$M$: Number of shots}
\For{$m \le M$}
\State $E_{i_m} \gets \langle E_{i_m}| H|E_{i_m}\rangle$
\State $m\gets m+1$
\EndFor
\State\Return $|E_{i_m}\rangle~\mbox{associated with lowest}~E_{i_m},~1\le m \le M.$
\end{algorithmic}
\end{algorithm}

\section{Unit-disk maximum independent set problem}
\label{sec:udmis}

In this section, we describe the unit-disk maximum independent set problem (UD-MIS), a basic graph optimization problem with many applications. 
We also explain how this problem can be formulated in  terms of the Hamiltonian of a quantum system.

Let $G = (V,E)$ be a graph with vertex set $V$ and edge set $E$, and let $N$ be the number of vertices of the graph $G$. 
An independent set of $G$ is a
set of mutually non-connected vertices. 
Let $S = (s_1 , \ldots ,  s_N)$ be a bitstring of length $N$ ($s_i \in \{0,1\}$), and let $\mathcal{B}_N$ be the set of all possible bitstrings of length $N$. The size of $\mathcal{B}_N$ is exponential in
the graph size, $|\mathcal{B}| = 2^N$. 
The Hamming weight of a bitstring $S$ is given by $|S| = \sum_{i = 1}^N s_i$.
The maximum independent set (MIS) problem consists
in determining the size of the largest possible independent set and returning an example of such a set. This problem can be formulated as the following maximization problem:
\begin{equation}
\begin{aligned}
\max_{S \in \mathcal{B}} \quad & |S|\\
\textrm{s.t.} \quad & S \in I.S, 
\end{aligned}
\end{equation}
where $I.S$ (for “Independent Sets”) is the set of bitstrings
$(s_1, \ldots , s_N )$ corresponding to independent sets of $G$. A bitstring $S = (s_1,\ldots,s_N)$ corresponds to an independent set 
if for all pairs of vertices $(i, i')$ we have $s_i = s_{i'} = 1 \implies (i, i')  \notin E$. 

The UD-MIS problem is the MIS problem restricted to unit-disk graphs. A graph is a unit-disk
graph if one can associate a position in the two-dimensional plane with
every vertex such that two vertices share an edge if and
only if their distance is smaller than unity. 

This optimization  problem can be reformulated 
as a quantum minimization problem that consists in finding the ground state of a Hamiltonian of a quantum system. 
The idea is to associate each bitstring $S = (s_1 , \ldots, s_N)$ with a quantum state  $|s_1 , \ldots, s_N \rangle$ of the $N$-qubit system.
The associated Hamiltonian is the following:
\begin{equation}\label{eq:UD-MIS_Hamiltonian}
    H =  -\sum_{i \in V} \hat{n}_i + u\sum_{(i,i') \in E} \hat{n}_i  \hat{n}_{i'},
\end{equation}
with $\hat{n}_i = (I - Z_i$)/2 and $Z_i$ the Pauli matrix in the $z$ direction acting on the qubit $i$, and $u$ a parameter whose value can be adjusted. Fixing $u > 1$ guarantees that the ground state of $H$ will necessarily be an independent set.

The Maximum Independent Set problem and the MaxCut problem, are prototypical examples that have received attention as candidates for quantum advantage 
\cite{Guerreschi_2019,otterbach2017unsupervisedmachinelearninghybrid,pichler2018quantumoptimizationmaximumindependent,Henriet_2020,Ebadi_2022,PhysRevResearch.5.043277,Lykov_2023,Yeo_2024,PhysRevA.108.052423,Henriet_2020_review,Kim2023, Serret_2020}.
The unit-disk maximum set problem is computationally challenging.
It belongs to the class of NP-hard problems, which means that finding an exact solution efficiently for large instances is infeasible and that any NP optimization problem can be reformulated as a UD-MIS problem with polynomial overhead \cite{Hromkovic2004Book}.
Researchers have developed several heuristic algorithms to approximate this problem. The are two main strategies that can be distinguished. One is based on two-level shifting schemes \cite{10.1007/978-3-540-46515-7_16,DAS202063,10.1007/11604686_31,doi:10.1142/S0218195917500078} and the other is based on breadth-first-search schemes \cite{10.1007/978-3-540-30559-0_18}.
 
Here, we are going to solve it by applying the optimization method based on QITE described in Section \ref{sec:optimization-based-on-qite}.
It should be noted that the Hamiltonian given in  \eqref{eq:UD-MIS_Hamiltonian} does not depend on the actual distances of the nodes, but the links between them. This implies that correlations between nodes are affected by the links they share rather than the geometric distance between them.
The QITE method is expected to work well under the assumption of finite correlation length with a geometric k-local Hamiltonian \cite{Motta_2019}. The UD-MIS graphs do not have, in general, a regular lattice shape, as can be seen in Fig. \ref{UD-MIS-graphs}. This means that correlation lengths may not follow an exponential decay law between node distances in the graph\footnote{See \cite{Serret_2020} for a discussion of the correlation length being roughly independent of the graph
size with exponential decay in UD-MIS problem.}.   
Regardless of this, we have used QITE as a valid heuristic method.

\section{Numerical results}\label{sec:numeric-results}

\begin{figure}
        \centering
        \begin{minipage}{0.40\textwidth}
            \centering
            \subfigure[]{
            \includegraphics[width=\linewidth]{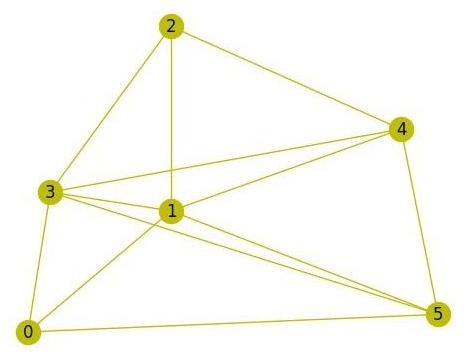}
                                    }
        \end{minipage}
        \hfill
        \begin{minipage}{0.40\textwidth}
            \centering
            \subfigure[]{
            \includegraphics[width=\linewidth]{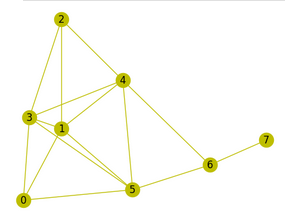}
                                    }
        \end{minipage}
        \hfill
        \begin{minipage}{0.40\textwidth}
            \centering
            \subfigure[]{
            \includegraphics[width=\linewidth]{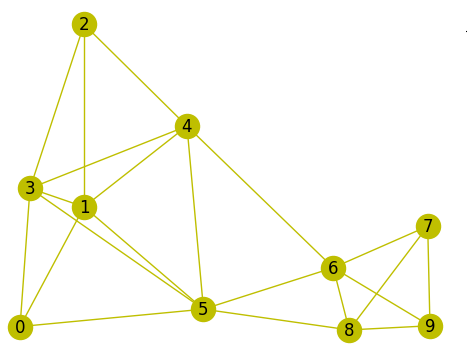}
                                    }
        \end{minipage}\caption{Examples of UD-MIS graphs consisting of 6 (a), 8 (b) and 10 (c) qubits.}
    \label{UD-MIS-graphs}
    \end{figure}

In this section we apply the optimization method based on QITE (Alg. \ref{alg:cap}) described in Section \ref{sec:optimization-based-on-qite}, to solve the UD-MIS problem presented in  Section \ref{sec:udmis}. 
In this section we summarize our numerical findings. We first explore several numerical quantities and analyze the failure probability for graphs of Fig. \ref{UD-MIS-graphs}. Then, we test the method by sampling several random graphs.
We use Hamiltonians of the form given in Eq. \eqref{eq:UD-MIS_Hamiltonian} with $u=1.35$. 
Finally, we compare QITE algorithm with quantum adiabatic computation and quantum annealing.

\subsection{Failure probability characterization}
\label{sec: Failure probability characterization}

We recall from section \ref{sec:optimization-based-on-qite} that the aim of our method is to obtain an \textit{acceptable state}, that is, 
an eigenstate with eigenvalue 
$E_i$ such that $E_i \leq E_0 + \delta E$, with $\delta E \geq 0$ a tolerable error.
In what follows, we numerically explore several quantities for the graphs instances depicted in Fig. \ref{UD-MIS-graphs} in order to understand how the performance of the method relates to the number of iterations, the domains, and the number of shots. The graphs were constructed using a square grid of size $2.2 \times 2.2$ and sampling two-dimensional points using a uniform distribution.
The Hamiltonians for these graphs are constructed according to Eq. \eqref{eq:UD-MIS_Hamiltonian}.The results will be used later on when testing Alg. \ref{alg:cap} on several instances of random graphs.

In an ideal case, that is when we choose a full domain $\mathcal{D}$ and a very small interval $\tau$,
the QITE state at time $t$, $\ket{ \phi^{qite}_t}$, should closely approximate the ITE state at time $t$, $\ket{\psi^{ite}_t}$.
In practical cases, the matching between these two states would depend on the chosen domain and the number of iterations. The difference between them can be characterized by $ \varepsilon(t)$, given in Eq. \eqref{eq:Barepsilon}, or by the fidelity.
The closer the QITE state is to the ITE state, the closer the QITE failure probability  $P_F^{qite}(t)$ is to the ITE failure probability   $P_F^{ite}(t)$, and the better the results of our algorithm should be. 
Thm. \ref{th: cota_qite} gives an upper bound for the difference between QITE and ITE failure probabilities in terms of $\varepsilon(t)$.
We thus start plotting $\varepsilon(t)$ and $\Bar{\varepsilon}(t)$ (Eq. \eqref{eq:varepsilon} and \eqref{eq:Barepsilon}, respectively) 
for the 6-qubit graph instance of Fig. \ref{UD-MIS-graphs}a. 
In all numerical simulations we have set $\tau=0.01$, so $t_{max}$ is fixed with the number of iterations, $n_{max}$. For the 6-qubit graph instance we use $t_{max}=10$ (i.e. $n_{max}=1000$). We choose two different domains: $\mathcal{D}_A$ and $\mathcal{D}_B$. $\mathcal{D}_A$ is chosen such that each $D_{l,j}$ equals the same qubit support of $h[l]$ for all iterations $1\le j \le n_{max}$. $\mathcal{D}_B$ is chosen in a similar way to $\mathcal{D}_A$ except that domains which are assigned to terms containing $Z_iZ_{i'}$ are expanded to contain a 4-qubit support around the linked qubits $i$ and $i'$ (see Appendix \ref{app:numerical} for details).
It should be noted that these domains imply that the evolution given by Eq. \eqref{eq:qite-operator} produces entanglement between qubits. 
This differs from other proposals where the focus is put on linear ansatz for the chosen domains \cite{alam2023solvingmaxcutquantumimaginary,PhysRevA.109.052430} (although a quadratic ansatz is also considered in \cite{PhysRevA.109.052430}). Our domains $\mathcal{D}_A$ and $\mathcal{D}_B$ are quadratic and quartic in terms of Pauli strings, respectively.

\begin{figure}
        \centering
        \includegraphics[width=0.72\textwidth]{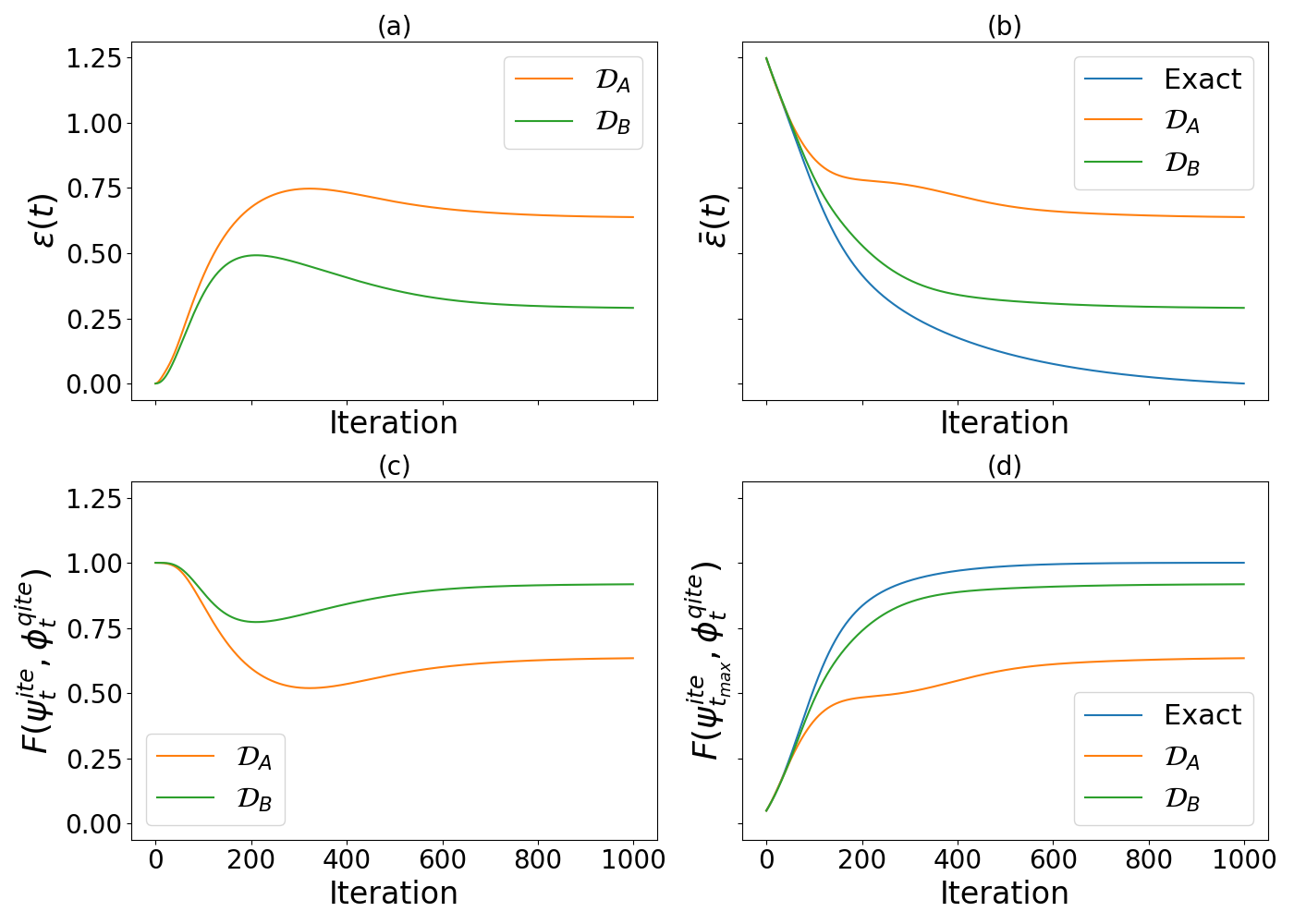} 
    \caption{Error and fidelity results for the 6-qubit graph instance of Fig. \ref{UD-MIS-graphs}a for different domains up to $n_{max}=1000$ iterations. (a) and (c) depict plots of $\varepsilon(t)$ (Eq. \eqref{eq:varepsilon}) and $F(\psi^{ite}_t,\phi^{qite}_t)=|\langle\psi^{ite}_t|\phi^{qite}_t\rangle|^2$, respectively. (b) and (d) depict plots of $\Bar{\varepsilon}(t)$ and fidelity $F(\psi^{ite}_{t_{max}},\phi^{qite}_t)=|\langle\psi^{ite}_{t_{max}}|\phi^{qite}_t\rangle|^2$. The blue line shows the calculation of the fidelity with respect to the ITE state $|\psi^{ite}_{t}\rangle$.}
    \label{nqubits-6-errors-and-fidelities-vs-it}
    \end{figure}

  In Fig. \ref{nqubits-6-errors-and-fidelities-vs-it}a and \ref{nqubits-6-errors-and-fidelities-vs-it}b we plot $\varepsilon(t)$ and $\Bar{\varepsilon}(t)$, respectively. Fig \ref{nqubits-6-errors-and-fidelities-vs-it}a shows that $\ket{\psi^{ite}_t}$ and $\ket{ \phi^{qite}_t}$ start to depart from each other as the number of iterations increases, reaching a plateau for long iterations. Lower values of $\varepsilon(t)$ indicate a good fit between both states while higher values show the contrary.  If both states are almost orthogonal, this implies that $\varepsilon(t)$ is near to $\sqrt{2}$. It should be noted that for all iterations we have $\varepsilon(t) \leq \sqrt{2}$, satisfying hypothesis of Thm. \ref{th: cota_qite}.
  For a better comparison, we plot the fidelity between both states in Fig. \ref{nqubits-6-errors-and-fidelities-vs-it}c. Fig. \ref{nqubits-6-errors-and-fidelities-vs-it}d shows convergence of QITE state to $|\psi^{ite}_{t_{max}}\rangle$ as the domain size and the number of iterations increase.  
  Regarding the chosen domains, we find that $\mathcal{D}_B$ performs better than $\mathcal{D}_A$ as expected, since $\mathcal{D}_B$ is bigger than $\mathcal{D}_A$. However, this comes at the cost of needing more expectation values to obtain coefficients $a_I[l,j]$, and bigger linear systems to solve. 
 From these plots we see that the low-dimensional domain $\mathcal{D}_A$  does not perform very well compared to $\mathcal{D}_B$ in matching the ITE state for large number of iterations. However, as we said in Section \ref{sec:optimization-based-on-qite}, we are interested in using QITE to get a state with eigenvalue $E_i$ such that $E_i \leq E_0 + \delta E$, with $\delta E$ a tolerable error. Moreover, we are interested in low-dimensional domains to ensure few measurements of expectation values (fewer coefficients $a[l,j]$ to be computed) and in few iterations to avoid deep quantum circuits. For this, we analyze in Fig. \ref{nqubits-6_probabilities_error_vs_iteration} and Fig. \ref{nqubits-6-proba_exacta_qite_numpy_corte_gs_y_fe_vs_shots} how $P_F^{qite}(t)$ behaves as $t$ increases together with the number of shots $M$.  
  
  \begin{figure}
    \centering
    \includegraphics[width=0.72\linewidth]{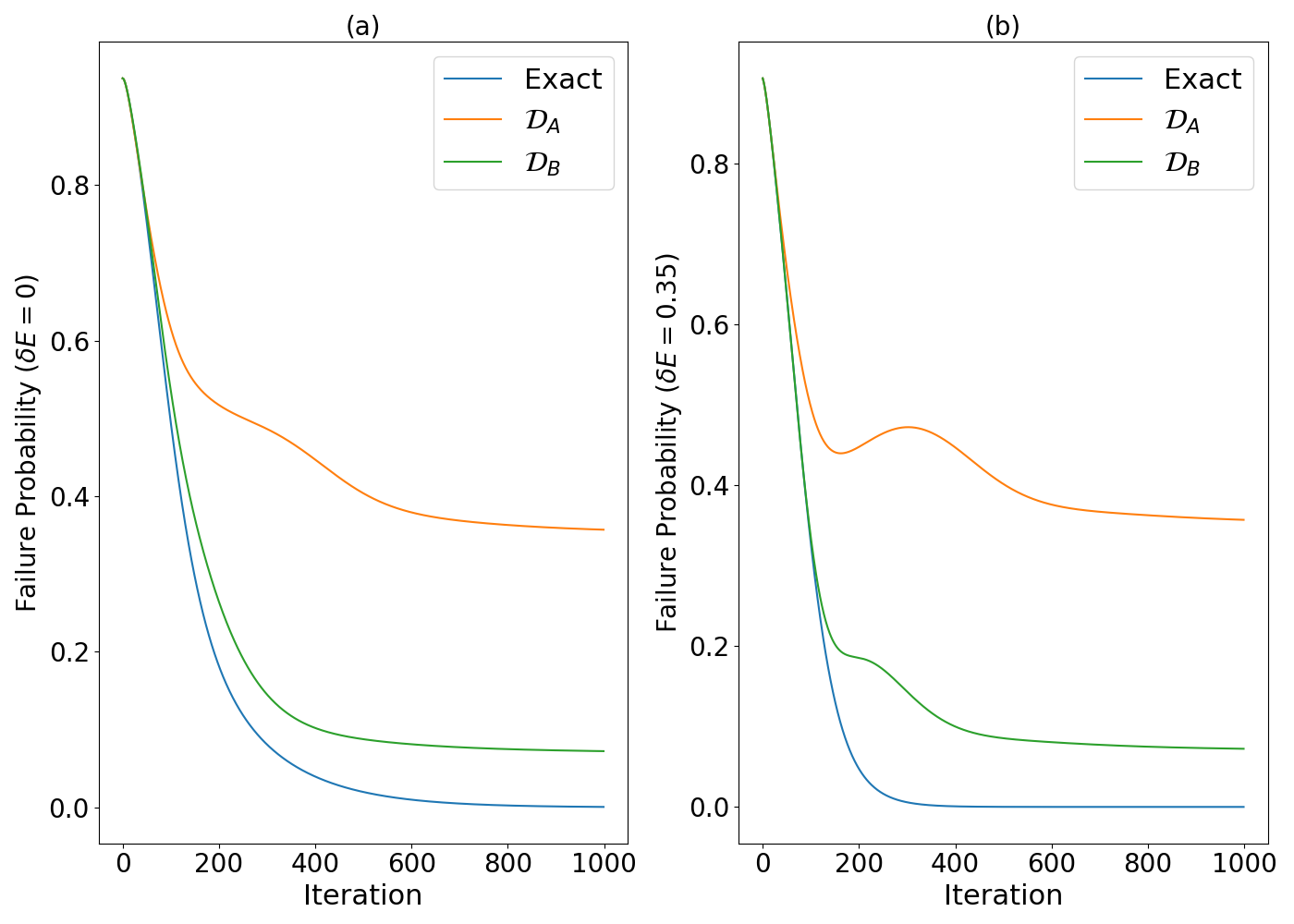}
    \caption{Failure probability for 6-qubit graph (Fig \ref{UD-MIS-graphs}a) up to $n_{max}=1000$ iterations. (a) $P_{F}^{ite}(t)$ (blue line) and $P_{F}^{qite}(t)$ for different domains and $\delta E=0$.  (b) Similar to (a) but $\delta E=0.35$. This value represents the difference between the energy of the ground state $E_0=-2$ and the energy of the first-excited state $E_1=-1.65$. The degeneracy of the ground states is $g=3$ and for first-excited states is $g=2$.}
    \label{nqubits-6_probabilities_error_vs_iteration}
\end{figure}

\begin{figure}
    \centering
    \includegraphics[width=0.72\linewidth]{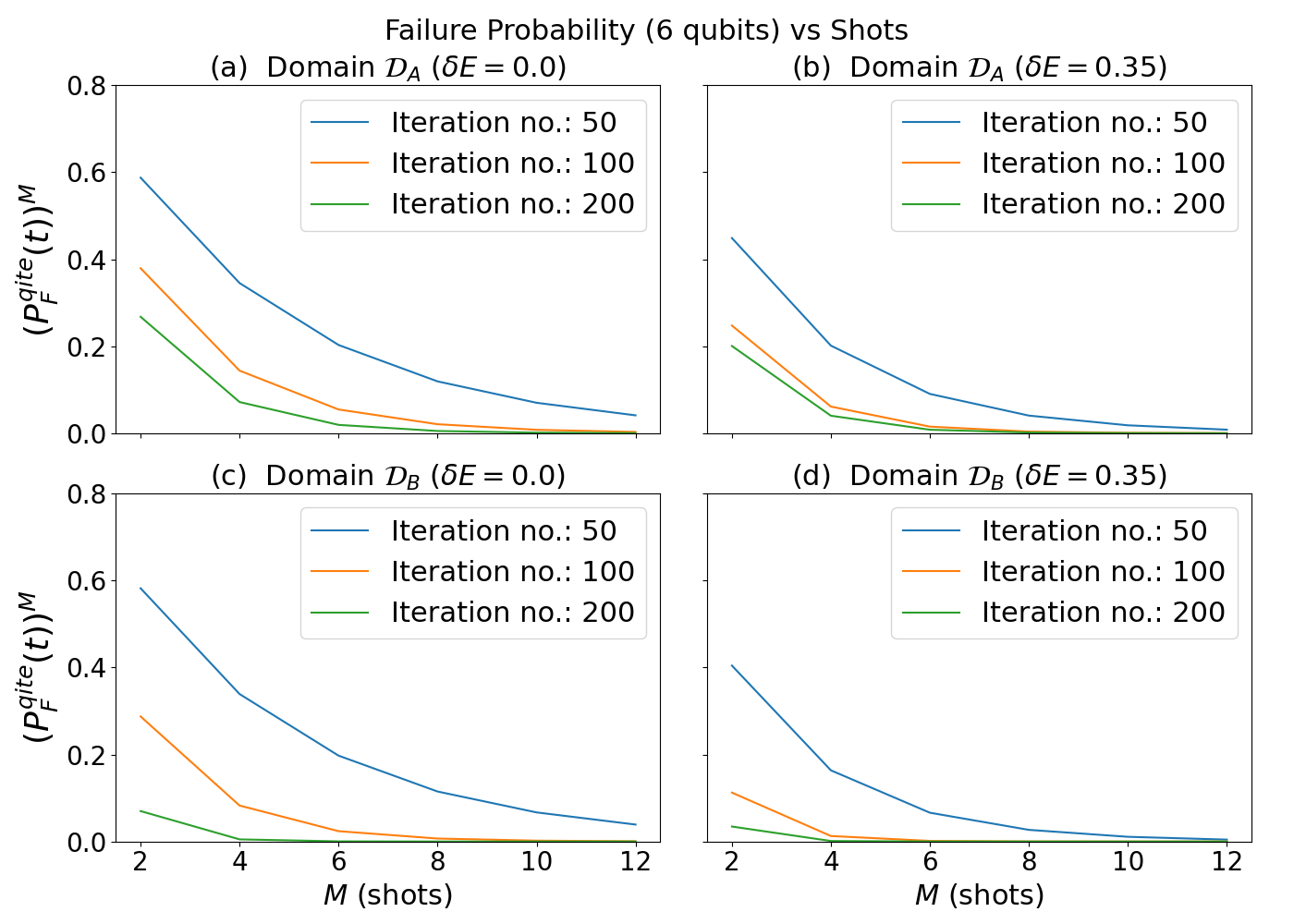}
    \caption{Failure probability $(P_{F}^{qite}(t))^M$  vs $M$ (number of shots) for 6-qubit graph for different numbers of iterations and domains.}
    \label{nqubits-6-proba_exacta_qite_numpy_corte_gs_y_fe_vs_shots}
\end{figure}

  In Fig. \ref{nqubits-6_probabilities_error_vs_iteration} we see that $P_F^{qite}(t)$ decreases a $t$ grows. Also, we see that $\mathcal{D}_B$ seems to be a much better option than $\mathcal{D}_A$ for getting low values of $P_F^{qite}(t)$. However, as stated before, even using $\mathcal{D}_A$ and a low number of iterations we might get good results, since we are only interested in getting acceptable states. This is what we see in Fig. \ref{nqubits-6-proba_exacta_qite_numpy_corte_gs_y_fe_vs_shots} when the number of shots is taken into account. $(P_F^{qite}(t))^M$ decreases considerably as the number of shots increases for different domains and number of iterations. 
  In both Fig. \ref{nqubits-6_probabilities_error_vs_iteration} and \ref{nqubits-6-proba_exacta_qite_numpy_corte_gs_y_fe_vs_shots}, $\delta E$ is chosen to include not only the ground state but also first-excited states as acceptable states (since the energy gap is roughly around $0.35$, then $\delta E = 0.35$).
  For this 6-qubit graph instance, $\delta E$ represents 17.5\% of the ground state energy (see description of Fig. \ref{nqubits-6_probabilities_error_vs_iteration}b). This implies that we are tolerating relative errors up to 17.5\%. The 
  relative error, $r$, is computed as $r=100 \times |E_0-E_i|/|E_0|$, with $E_i$ the obtained eigenvalue and $E_0$ the ground state energy. 
 Fig. \ref{nqubits-6-proba_exacta_qite_numpy_corte_gs_y_fe_vs_shots} shows, for different domains, how many iterations and number of shots are needed to get a reasonably low probability of failure for our method. It should be noted that the number of shots can be kept proportional to the number of qubits to obtain good results.

 We have also repeated the same analysis for the 8- and 10-qubit graphs up to $n_{max}=1000$ iterations. These cases have a similar behavior compared to the 6-qubit graph (see Figs. \ref{nqubits-8-errors-and-fidelities-vs-it},  \ref{nqubits-8_probabilities_error_vs_iteration}, \ref{nqubits-8-proba_exacta_qite_numpy_corte_gs_y_fe_vs_shots}
 in  Appendix \ref{app:8qubits} for 8-qubit plots and Figs. \ref{nqubits-10-errors-and-fidelities-vs-it}, \ref{nqubits-10_probabilities_error_vs_iteration}, \ref{nqubits-10-proba_exacta_qite_numpy_corte_gs_y_fe_vs_shots} 
 in \ref{app:10qubits} for 10-qubit plots, respectively).
Based on these results, for our purposes it is sufficient to test Alg. \ref{alg:cap} on randomly generated instances using $\D_A$ with $100$ iterations and $M$ at most equal to $2 N$ ($N$ number of qubits). This is what we do in the next subsection.

  \subsection{Testing random samples of UD-MIS graphs} 

\label{sec: Testing random samples}

Based on the analysis of previous subsection we have tested Alg. \ref{alg:cap} on several random UD-MIS graphs for $6$, $8$ and $10$ qubits with domain $\D_A$, $n_{max}=100$ iterations and $\tau = 0.01$. We tested on approximately $400$ graphs for $6$ qubits, $150$ graphs for $8$ qubits and $10$ graphs for $10$ qubits.  
\begin{figure}
    \centering
    \includegraphics[width=0.72\linewidth]{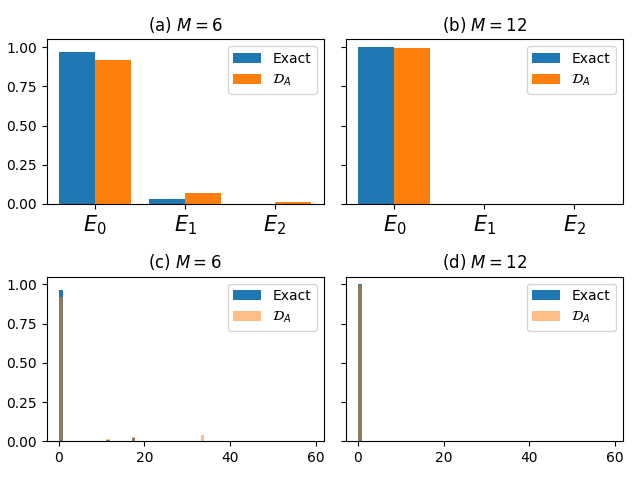}
    \caption{(a) and (b) show normalized histograms of obtained eigenvalues for 6-qubit graphs, with number of shots $M=6$ and $M=12$, respectively. (c) and (d) 
    show normalized histograms of relative errors for $M=6$ and $M=12$ shots. The number of UD-MIS instances to make these plots was around 400.}
    \label{nqubits-6-counts-vs-energy}
\end{figure}

In Fig. \ref{nqubits-6-counts-vs-energy}a and \ref{nqubits-6-counts-vs-energy}b we plot, for $M=6$ and $M=12$ shots, the  normalized histogram of the eigenvalues obtained using Alg. \ref{alg:cap} for 6-qubits graphs. 
In Fig. \ref{nqubits-6-counts-vs-energy}c and \ref{nqubits-6-counts-vs-energy}d  we plot,
the histogram of relative errors. 
From Fig. \ref{nqubits-6-counts-vs-energy}a we conclude that even for $M=6$ shots the  Alg. \ref{alg:cap} returns acceptable results, in accordance with Fig. \ref{nqubits-6-proba_exacta_qite_numpy_corte_gs_y_fe_vs_shots}.
For this random sample of 6-qubits graphs, the energy gap is roughly around $0.35$, 
representing at least $15\%$ of the ground state. 
When the gap is small, it is more difficult to find the ground state, since there is a greater probability of obtaining lower-energy excited states. However, in this case, the relative error is small. Therefore, from an  optimization point of view, all these states are expected to be acceptable states since the difference in the cost function is very small.

We have also sampled random graphs for $8$- and $10$-qubits instances and obtained similar results presented in Fig. \ref{nqubits-8-counts-vs-energy} in Appendix \ref{app:8qubits} and Fig. \ref{nqubits-10-counts-vs-energy} in \ref{app:10qubits}, respectively.
\subsection{Comparison with adiabatic quantum  computing and quantum annealing}

In this subsection we briefly compare QITE algorithm with quantum adiabatic computation and quantum annealing.
The total running time of QITE algorithm depends on the size of the chosen domain $\D$. In Motta et al \cite{Motta_2019} it is claimed that for many physical systems the domain size needed to obtain a good performance   is bounded by
$O(N_q)$, where $N_q$ is given in terms of the correlation length (see Eq. (34) of \cite{Motta_2019}). In this scenario, the total running time is of the order of $mnO(\exp(N_q))$, with $m$ the number of terms in the Hamiltonian and $n$ the number of Trotter steps. Moreover, in some situations, QITE algorithm can be used as a heuristic method choosing a  fixed domain. In this case the order of the running time can be further reduced.

Adiabatic quantum computation relies on adiabatically evolving a system from a simple initial Hamiltonian to a final one encoding the problem's solution. Success depends on maintaining the ground state throughout this slow evolution, demanding long coherence times, which are challenging to achieve.
There has been a debate about whether or not AQC would run any faster than classical algorithms due to the fact that if the problem has size $N$, one can have an evolving time exponentially increasing with $N$ in order for the system to remain in the ground state  \cite{Lucas:2013ahy, Abbas_2024}. This can be a serious disadvantage for AQC.

Quantum annealing is inspired by AQC, but relaxes some of its strict requirements. It can be considered a heuristic application of AQC. This method also involves changing from a simple initial Hamiltonian to a final one encoding the problem's solution, but it does not necessarily adhere to the adiabatic theorem as strictly. The system might not stay perfectly in the ground state, and some transitions to higher-energy states can occur. 

In order to compare the performance between quantum annealing and QITE it is necessary to define suitable metrics. We consider \emph{average maximum approximation ratio} as a metric to compare with the quantum algorithm reported in \cite{Serret_2020}. The maximum approximation ratio is defined as $\alpha= 1-r/100$, with $r$ the relative error defined in Subsection \ref{sec: Failure probability characterization}. The average maximum approximation ratio is defined as the average of the maximum approximation ratios calculated from the relative errors of histogram Fig. \ref{nqubits-6-counts-vs-energy} for 6 qubits, Fig. \ref{nqubits-8-counts-vs-energy} and Fig. \ref{nqubits-10-counts-vs-energy} for 8 and 10 qubits, respectively. The results are presented in Table \ref{tab:averagemaximumapproxratio}.

\begin{table}[h]
    \centering
    \begin{tabular}{|c|c|c|}
        \cline{2-3}
         \multicolumn{1}{c|}{} & $M=N$ & $M=2N$ \\ 
        \cline{1-3}
     6 qubits   & $0.97$ & $\sim 1.0$ \\
      8 qubits  & $0.98$  & $\sim 1.0$ \\
      10 qubits  & $0.98$ & $0.99$ \\
        \cline{1-3}
    \end{tabular}
    \caption{Average maximum approximation ratio of Alg. \ref{alg:cap} for 6, 8 and 10 qubits. $M$ is the number of shots and $N$ the number of total qubits.}
    \label{tab:averagemaximumapproxratio}
\end{table}

Fig. 4 of \cite{Serret_2020} shows the average  maximum approximation ratio for quantum annealing with Rydberg atoms for different noise levels up to 10 shots and several numbers of qubits. For less than 12 qubits it can be seen from this figure that the average maximum approximation ratio is very close to $1.0$ for $\gamma=0.0$ (noiseless case). Although our results seem to agree with theirs, we cannot conclude that Alg. \ref{alg:cap} performs similar to or different from quantum anneling of \cite{Serret_2020}. This is due to a lack of testing with a bigger number of qubits and also the analysis carried out in \cite{Serret_2020} assumes a temporal window of 2 seconds. In our case this temporal window would translate into a number of iterations, but this comparison is  complicated and we leave it for future work.

\section{Conclusions}
\label{sec:conclusions}

In this work, we applied a method based on the QITE algorithm to solve UD-MIS optimization problems. In Section \ref{sec:qite} we briefly introduced the QITE algorithm presented in \cite{Motta_2019}. 
In Section \ref{sec:optimization-based-on-qite} we described our proposed method for solving  optimization problems using QITE and we provided an upper bound for the failure probability. 

In Section \ref{sec:numeric-results} we tested
the proposed method on UD-MIS problems. First, in Subsection \ref{sec: Failure probability characterization},
we characterized the error $\varepsilon(t)$ and the failure probability for 6-, 8- and 10-qubit graphs depicted in Fig. \ref{UD-MIS-graphs}.
We obtained that $\mathcal{D}_B$ performs better than $\mathcal{D}_A$ in matching the ITE state for a large number of iterations.  However, $\D_A$ performs well enough in terms of the failure probability with a lower number of iterations and it has the advantage of involving fewer measurements of expectation values (fewer coefficients $a[l,j]$ to be computed). Moreover, we observed that the failure probability
decreases rapidly with the number of shots and the tolerable error. 
The numerical results obtained for 6-, 8- and 10-qubit graphs show that both the number of iterations and shots do not need to be drastically increased  with the number of qubits. 
Further numerical studies with more qubits will be performed 
in future works.

In Subsection \ref{sec: Testing random samples}, we tested the proposed method on approximately 400 graphs for 6 qubits,
150 graphs for 8 qubits, and 10 graphs for 10 qubits.
Our numerical findings show that for 100 iterations and a domain of type $\mathcal{D}_A$, the number of shots needed can be made proportional to the number of qubits to obtain a failure probability of less than $0.1$. It is expected that larger domains will tend to perform even better as suggested by Fig. \ref{nqubits-6-proba_exacta_qite_numpy_corte_gs_y_fe_vs_shots}, \ref{nqubits-8-proba_exacta_qite_numpy_corte_gs_y_fe_vs_shots} and \ref{nqubits-10-proba_exacta_qite_numpy_corte_gs_y_fe_vs_shots}.
We note that our chosen domains imply that the evolution given by Eq. \eqref{eq:qite-operator} produces entanglement between qubits.
This contrasts with other approaches  where QITE is applied with a linear ansatz  \cite{alam2023solvingmaxcutquantumimaginary,PhysRevA.109.052430}.

We have also plotted a distribution of relative errors in Figs. \ref{nqubits-6-counts-vs-energy}, \ref{nqubits-8-counts-vs-energy},\ref{nqubits-10-counts-vs-energy}. From this we computed the average maximum approximation ratio in Table \ref{tab:averagemaximumapproxratio}, which is an appropriate metric to compare to other proposals. In particular we compared it to results of \cite{Serret_2020}. However, these
comparisons become numerically relevant when testing against a large number of qubits and we have only explored up to 10 qubits.

Further analysis is needed to better characterize our method, including  studying its scalability with the number of qubits,  its effectiveness on real quantum computers, and its applicability to other optimization problems.
Additionally, it would be useful to investigate the effects of choosing different domains.

\begin{acknowledgments}
The authors acknowledge financial support from 
SECYT-UNC and CONICET (PIP Resol. 20204-436). We would like to thank Federico Holik for initial discussions about this work.

\end{acknowledgments}

\appendix

\section{Auxiliary lemmas and proofs}
\label{app:aux_proofs}

In this appendix we present the proofs of Thms. \ref{th: cota_ite} and \ref{th: cota_qite} presented in Section \ref{sec:optimization-based-on-qite}.

Proof of Theorem~\ref{th: cota_ite} (\textbf{ITE failure probability upper bound}):

\begin{proof}
\hfill

Let us consider the set of eigenvalues of $H$ that are greater than $E_0 +\delta E$,  that is, 
$\{ E_i  : E_i > E_0 + \delta E \}$. 
We call $r$ the lower index of eigenvalues of this set.

The ITE state at time $t$ can be expressed as
\begin{equation}
\ket{\psi^{ite}_t}= \frac{1}{\mathcal{K}}\sum_{i = 0}^{d-1} e^{-t E_i} | E_i\rangle, ~~~ \mbox{with} ~~~ \mathcal{K}^2  = \sum_{i = 0}^{d-1} e^{-2 t E_i},
\end{equation}
and the failure probability for this state is  
\begin{align}
    P_F^{ite }(t)=\sum_{E_i > E_0+\delta E}
        | \langle E_i \ket{\psi^{ite}_t} | ^2 = \sum_{i = r}^{d-1}
        | \langle E_i \ket{\psi^{ite}_t} | ^2  .
\end{align}

Then, 
\begin{align}
        P_F^{ite }(t) & =   \frac{\sum_{i = r}^{d-1} e^{- 2 t E_i}}{\sum_{i = 0}^{d-1}e^{-2 t E_i}}=   \frac{\sum_{i = r}^{d-1} e^{- 2 t \delta E_i}}{\sum_{i = 0}^{d-1}e^{-2 t  \delta E_i}} =    \frac{\sum_{i = r}^{d-1} e^{- 2 t \delta E_i}}{g + \sum_{i = g}^{r-1}e^{-2 t  \delta E_i} + \sum_{i = r}^{d-1}e^{-2 t  \delta E_i}},
\end{align}
with $\delta E_i =  E_i -E_0$, and we have used that the fundamental eigenvalue has degeneracy $g$. 

For $i \in [ g , r-1 ]$ we have  $\delta E_i \leq \delta E$, which implies: $ (r-g)e^{-2 t  \delta E}  = \sum_{i = g}^{r-1}e^{-2 t  \delta E} \leq \sum_{i = g}^{r-1}e^{-2 t  \delta E_i}$. Then, 
\begin{equation}
        \frac{\sum_{i = r}^{d-1} e^{- 2 t \delta E_i}}{g + \sum_{i = g}^{r-1}e^{-2 t  \delta E_i} + \sum_{i = r}^{d-1}e^{-2 t  \delta E_i}} \leq  \frac{\sum_{i = r}^{d-1} e^{- 2 t \delta E_i}}{g + (r-g) e^{-2 t  \delta E} + \sum_{i = r}^{d-1}e^{-2 t  \delta E_i}}.
\end{equation}

For $i \in [ r , d-1 ]$  we have  $\delta E_i > \delta E$, then $\sum_{i = r}^{d-1}e^{-2 t  \delta E_i} \leq  (d-r)e^{-2 t  \delta E}$. Since $f(x) = \frac{x}{A + x }$ (with $A > 0$ is a monotonically increasing function, we have 
\begin{equation}
         \frac{\sum_{i = r}^{d-1} e^{- 2 t \delta E_i}}{g + (r -g) e^{-2 t  \delta E} + \sum_{i = r}^{d-1}e^{-2 t  \delta E_i}} \leq  \frac{(d-r) e^{- 2 t \delta E}}{g + (r -g) e^{-2 t  \delta E} + (d-r)e^{-2 t  \delta E}}.
\end{equation}
Therefore, 
\begin{equation}
        P_F^{ite }(t) \leq \frac{(d-r) e^{- 2 t \delta E}}{g + (d -g) e^{-2 t  \delta E} } \leq 
        \frac{(d-g) e^{- 2 t \delta E}}{g + (d -g) e^{-2 t  \delta E} }.
\end{equation}
Finally, 
\begin{equation}
        P_F^{ite }(t) \leq 
        \frac{1}{1 + g (d -g)^{-1} e^{2 t  \delta E} }.
\end{equation}

\end{proof}

\begin{lemma} \label{lemma1}
    Let $\H$ be a finite Hilbert space and $\S_1 , \S_2 \subseteq \H$ vector subspaces such that $\S_1 \perp \S_2$ and $\S_1 \oplus \S_2 = H$. Let $\ket {\psi}  =  \ket {\psi_1} + \ket {\psi_2}$ and $\ket {\phi}  =  \ket {\phi_1} + \ket {\phi_2}$ be normalized vectors, such that 
 \begin{align}
           \ket {\psi_1} =  \cos \alpha \ket{\hat {\psi}_{1}}, ~~~ \ket {\psi_2} = \sin \alpha\ket{\hat {\psi}_{2}}, ~~~ 0 \leq \alpha \leq \pi/2,  \\
           \ket {\phi_1} =  \cos \beta \ket{\hat {\phi}_{1}}, ~~~ \ket {\phi_2} = \sin \beta \ket{\hat {\phi}_{2}},  ~~~ 0 \leq \beta\leq \pi/2.
\end{align}
with $\ket{\hat{\psi}_1} , \ket{\hat{\phi}_1} \in \S_1$ and   $\ket{\hat{\psi}_2}, \ket{\hat{\phi}_2} \in \S_2$ normalized vectors.

     Given $0 \leq  \varepsilon \leq \sqrt{2}$, if $|| \ket{\psi} -  \ket{\phi}  ||  \leq \varepsilon$, we have   $|\alpha - \beta | \leq \theta_M, $  with $\theta_{M} = 2 \arcsin(\frac{\varepsilon}{2})$.

\end{lemma}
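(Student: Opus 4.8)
The plan is to reduce the Hilbert-space inequality to a purely two-dimensional statement about the ``angle vectors'' $(\cos\alpha,\sin\alpha)$ and $(\cos\beta,\sin\beta)$ in $\mathbb{R}^2$. The point is that $\theta_M = 2\arcsin(\varepsilon/2)$ is exactly the chord-angle relation for unit vectors: two real unit vectors at Euclidean distance $\varepsilon$ subtend an angle $2\arcsin(\varepsilon/2)$. So the target inequality $|\alpha-\beta|\le\theta_M$ is equivalent to showing that the 2D unit vectors $(\cos\alpha,\sin\alpha)$ and $(\cos\beta,\sin\beta)$ lie at distance at most $\varepsilon$, and my whole job is to show that the distance $\|\ket\psi-\ket\phi\|$ in $\mathcal{H}$ dominates this 2D chord length.

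First I would use the orthogonal decomposition. Since $\S_1\perp\S_2$ and $\S_1\oplus\S_2=\H$, the difference splits orthogonally, giving
\begin{equation}
\|\ket\psi-\ket\phi\|^2 = \|\ket{\psi_1}-\ket{\phi_1}\|^2 + \|\ket{\psi_2}-\ket{\phi_2}\|^2 .
\end{equation}
Expanding the first term with the parametrization in the statement yields
\begin{equation}
\|\ket{\psi_1}-\ket{\phi_1}\|^2 = \cos^2\alpha + \cos^2\beta - 2\cos\alpha\cos\beta\,\mathrm{Re}\langle\hat\psi_1|\hat\phi_1\rangle ,
\end{equation}
and analogously for the $\S_2$ term with $\sin$ and $\langle\hat\psi_2|\hat\phi_2\rangle$.

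Next I would bound each piece from below. By Cauchy--Schwarz and normalization, $\mathrm{Re}\langle\hat\psi_1|\hat\phi_1\rangle\le 1$; because $\alpha,\beta\in[0,\pi/2]$ forces $\cos\alpha\cos\beta\ge 0$, multiplying by $-2\cos\alpha\cos\beta\le 0$ reverses the inequality and gives $\|\ket{\psi_1}-\ket{\phi_1}\|^2\ge(\cos\alpha-\cos\beta)^2$, and likewise $\|\ket{\psi_2}-\ket{\phi_2}\|^2\ge(\sin\alpha-\sin\beta)^2$. Summing and using the identity for the chord length,
\begin{equation}
\varepsilon^2 \ge \|\ket\psi-\ket\phi\|^2 \ge (\cos\alpha-\cos\beta)^2 + (\sin\alpha-\sin\beta)^2 = 2-2\cos(\alpha-\beta) = 4\sin^2\!\Big(\tfrac{\alpha-\beta}{2}\Big) .
\end{equation}
Finally I would solve for $|\alpha-\beta|$: since $(\alpha-\beta)/2\in[-\pi/4,\pi/4]$, $\sin$ is increasing there and $\arcsin$ applies, so $\sin(|\alpha-\beta|/2)\le\varepsilon/2$ immediately gives $|\alpha-\beta|\le 2\arcsin(\varepsilon/2)=\theta_M$.

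The main obstacle is really only apparent: we have no control over the relative phase $\langle\hat\psi_1|\hat\phi_1\rangle$. The resolution is that we never need it — taking the worst case $\mathrm{Re}=1$ throws away information but in the \emph{favorable} direction, and the sign condition $\cos\alpha\cos\beta\ge0$ (guaranteed precisely by the range $[0,\pi/2]$) is what makes the inequality land with the correct orientation. The hypothesis $\varepsilon\le\sqrt2$ is exactly what is needed to keep $\varepsilon/2\le\sqrt2/2<1$ inside the domain of $\arcsin$, so the last step is well defined.
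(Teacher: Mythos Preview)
Your proof is correct and follows essentially the same route as the paper's: both expand $\|\ket\psi-\ket\phi\|^2$, bound $\mathrm{Re}\langle\hat\psi_i|\hat\phi_i\rangle\le 1$ using the nonnegativity of $\cos\alpha\cos\beta$ and $\sin\alpha\sin\beta$ on $[0,\pi/2]$, and then invert the resulting inequality $\cos|\alpha-\beta|\ge 1-\varepsilon^2/2$ (equivalently $4\sin^2\tfrac{|\alpha-\beta|}{2}\le\varepsilon^2$). Your explicit orthogonal splitting and the chord--angle interpretation in $\mathbb{R}^2$ are a clean way to frame it, but the underlying computation is identical to the paper's.
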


\begin{proof}

\begin{align}
    \varepsilon^2 \geq || \ket{\psi} -\ket{\phi} ||^2 = || \ket{\psi}|| ^2 + || \ket{\phi} ||^2  -2 \mbox{Re} \langle \psi | \phi \rangle  = 2 - 2 \cos \alpha \cos \beta \, \mbox{Re} \langle \hat{\psi}_1 | \hat{\phi}_1 \rangle   - 2 \sin \alpha \sin \beta \, \mbox{Re} \langle \hat{\psi}_2 | \hat{\phi}_2 \rangle.  
\end{align}
Then,
\begin{align} \label{ineq:theta}
  1 - \frac{\varepsilon^2}{2}   \leq \cos \alpha \cos \beta \, \mbox{Re} \langle \hat{\psi}_1 | \hat{\phi}_1 \rangle    +   \sin \alpha \sin \beta \, \mbox{Re} \langle \hat{\psi}_2 | \hat{\phi}_2 \rangle   \leq \cos \alpha \cos \beta   +   \sin \alpha \sin \beta  = \cos |\alpha - \beta|.
\end{align}

From inequality \eqref{ineq:theta}, we obtain
   \begin{equation}
       |\alpha - \beta| \leq \theta_{max} = \mbox{arccos}\left( 1- \frac{\varepsilon^2}{2} \right) = 2 \, \mbox{arcsin}\left(  \frac{\varepsilon^2}{2}\right).
   \end{equation}

\end{proof}

Proof of Theorem~\ref{th: cota_qite} (\textbf{QITE failure probability upper bound}):

\begin{proof}

Let us consider the set of eigenvalues of $H$ that are greater than $E_0 +\delta E$,  that is, 
$\{ E_i  : E_i > E_0 + \delta E \}$, We call $r$ the lower index of eigenvalues of this set, and we define two sets of eigenvectors of $H$, $S_1 =  \{ \ket{E_i} \}_{i=0}^{r-1}$ and $S_2 =  \{ \ket{E_i} \}_{i=r}^{d-1}$. Then, we consider the linear spanned subspaces $\S_1 = \mbox{span}(S_1)$ and $\S_2 = \mbox{span}(S_2)$. This vector subspaces satisfy $\S_1 \perp \S_2$ and $\S_1 \oplus \S_2 = H$.

We can express $\ket {\psi^{ite}_t}  =  \ket {\psi_1} + \ket {\psi_2}$ and $\ket {\phi^{ite}_t}  =  \ket {\phi_1} + \ket {\phi_2}$, with $\ket{\psi_1} , \ket{\phi_1} \in \S_1$ and   $\ket{\psi_2}, \ket{\phi_2} \in \S_2$. Moreover, since $ \ket {\psi^{ite}_t}$ and $ \ket {\phi^{qite}_t}$ are normalized vectors, we can express them as follows
\begin{align}
           \ket {\psi_1} &=  \cos \alpha \ket{\hat {\psi}_{1}}, ~~~ \ket {\psi_2} = \sin \alpha\ket{\hat {\psi}_{2}}, ~~~ 0 \leq \alpha \leq \pi/2,  \\
           \ket {\phi_1} &=  \cos \beta \ket{\hat {\phi}_{1}}, ~~~ \ket {\phi_2} = \sin \beta\ket{\hat {\phi}_{2}},  ~~~ 0 \leq \beta\leq \pi/2.
\end{align}
with $\ket{\hat{\psi}_1} , \ket{\hat{\phi}_1} \in \S_1$ and   $\ket{\hat{\psi}_2}, \ket{\hat{\phi}_2} \in \S_2$ normalized vectors.

 Given $0 \leq  \varepsilon \leq \sqrt{2}$, if $||\ket {\psi^{ite}_t}  - \ket {\phi^{qite}_t} || \leq \varepsilon$, due to Lemma \ref{lemma1}, we have   $|\alpha - \beta | \leq \theta_M, $  with $\theta_{M} = 2 \arcsin(\frac{\varepsilon}{2})$.

Moreover, the failure probabilities $P_F^{ite}(t)$ and $P_F^{qite}(t)$ are given by 
\begin{align}
\label{eq:Pf_ite_angulo}
P_F^{ite}(t) &= || \ket{\psi_2} ||^2 = \sin^2 \alpha,  \\
\label{eq:Pf_qite_angulo}
P_F^{qite}(t) &=  || \ket{\phi_2} ||^2 = \sin^2 \beta.
\end{align}

Then,  we have  
\begin{align}
|P_F^{qite}(t) - P_F^{ite}(t)| &= |\sin ^2\beta  -\sin ^2\alpha|  = |\sin (\alpha + \beta) \sin(\alpha -\beta)| \leq \sin|\alpha -\beta|,
\end{align}
where in the first step we use Eqs. \eqref{eq:Pf_ite_angulo} and \eqref{eq:Pf_qite_angulo}, in the second step we use a trigonometric identity, and in the last step we have used $|\sin (\alpha + \beta)| \leq 1$. 

Since $|\alpha - \beta | \leq \theta_M \leq \frac{\pi}{2}$, we have $\sin|\alpha -\beta| \leq \sin \theta_M $. Therefore, 
\begin{align}
|P_F^{qite}(t) - P_F^{ite}(t)| \leq \sin \theta_M.
\end{align}
Finally, using  $\sin (2 \arcsin x)= 2x \sqrt{1 - x^2}$ and $\theta_{M} = 2 \arcsin(\frac{\varepsilon}{2})$, we obtain
\begin{align}
    |P_F^{qite}(t) - P_F^{ite}(t)|  \leq  \varepsilon \sqrt{1 - \frac{\varepsilon ^2}{4}}.
\end{align}

\end{proof}

\section{Some remarks and extra results of Section \ref{sec:numeric-results}}\label{app:numerical}
In this Appendix we briefly summarize the details of the numerical analysis.
We have discretized time as $t=\tau n$ with $\tau=0.01$. When analyzing single instances of graphs we simulated up to 1000 iterations (i.e. $t_{max}=10$) and when testing several random graphs we set $n_{max}=100$. A central aspect of QITE is the chosen domain $\mathcal{D}$. Since UD-MIS graphs in general do not have a regular lattice shape, it is difficult to propose natural domains for them. Nevertheless, we use the following recipe for the Hamiltonian \eqref{eq:UD-MIS_Hamiltonian}. Single $Z_i$ terms are kept separated from interaction terms $Z_iZ_{i'}$. And for each single $Z_i$ the associated $D_{l,j}$  equals exactly the support of qubit $i$ (for all $1\leq j \leq n_{max}$). For the interaction terms we define two different domains $\mathcal{D}_A$ and $\mathcal{D}_B$. $\mathcal{D}_A$ assigns  each interaction term to a domain of exactly the same support on which the interaction term acts on. That is, $Z_iZ_{i'}$ acts non-trivially on qubits $i$ and $i'$, so the associated domain has non-trivial support only on $i$ and $i'$. On the other hand, $\mathcal{D}_B$ is similar to  $\mathcal{D}_A$ except that the domain which is associated with $Z_iZ_{i'}$ is expanded to a total of 4 qubits containing not only $i$ and $i'$ but other two qubits linked to $i$ and $i'$. In this case, if $i$ and $i'$ are linked with several nodes, we  randomly choose any of those nodes (since every link in a UD-MIS graphs weighs the same). Note that Hamiltonian (\ref{eq:UD-MIS_Hamiltonian}) does not depend on the distances between graph nodes, otherwise we would expect this feature to provide a natural way to build domains.

The complete domain $\mathcal{D}_A$ for the 6-qubit instance of Fig. \ref{UD-MIS-graphs}a is: 
\begin{equation}\label{eq:D_A-6q}
\begin{aligned}
    \mathcal{D}_A & =[(0),(1),(2),(3),(4),(5),\\
    &~~~~(0,1),(0,3),(0,5), \\
    &~~~~(1,2),(1,3),(1,4),(1,5),\\
    &~~~~(2,3),(2,4),\\
    &~~~~(3,4),(3,5),\\
    &~~~~(4,5)].
\end{aligned}
\end{equation}
The first line of Eq. \eqref{eq:D_A-6q} shows domains which are associated with each single $Z_i$ term of the Hamiltonian \eqref{eq:UD-MIS_Hamiltonian} and the rest of the lines show domains which are associated with interaction terms $Z_iZ_{i'}$. It should be understood that each domain represents a complete Pauli basis for the qubits involved. The chosen $\mathcal{D}_B$ for the 6-qubit instance of Fig. \ref{UD-MIS-graphs}a is:
\begin{equation}\label{eq:D_B-6q}
\begin{aligned}
    \mathcal{D}_B & =[(0),(1),(2),(3),(4),(5),\\
    &~~~~(0,1,3,5),(0,1,2,3),(0,1,4,5), \\
    &~~~~(1,2,4,5),(0,1,3,4),(1,2,4,5),(0,1,3,5),\\
    &~~~~(0,2,3,4),(2,3,4,5),\\
    &~~~~(0,3,4,5),(1,3,4,5),\\
    &~~~~(0,2,4,5)].
\end{aligned}
\end{equation}
We say the size of $\mathcal{D}_B$ is bigger than the size of $\mathcal{D}_A$ since the first one contains  bigger qubit supports for interaction terms. This implies  measuring more expectation values per Trotter sub-step.
\subsection{Results for 8-qubits graphs}\label{app:8qubits}

In this section we present the results for random graphs of 8 qubits.
As in the case of 6-qubit graphs, we obtained that $\mathcal{D}_B$ performs better than $\mathcal{D}_A$ in matching the ITE state for large number of iterations. However, $\D_A$ performs well enough in terms of the failure probability. Again, we obtained that the failure probability
decreases when increasing the number of iterations, the number of shots, and the tolerable error. As in the case of 6 qubits, the number of iterations and shots do not need to be increased exponentially with the number of qubits.

\begin{figure}[H]
        \centering
        \includegraphics[width=0.70\textwidth]{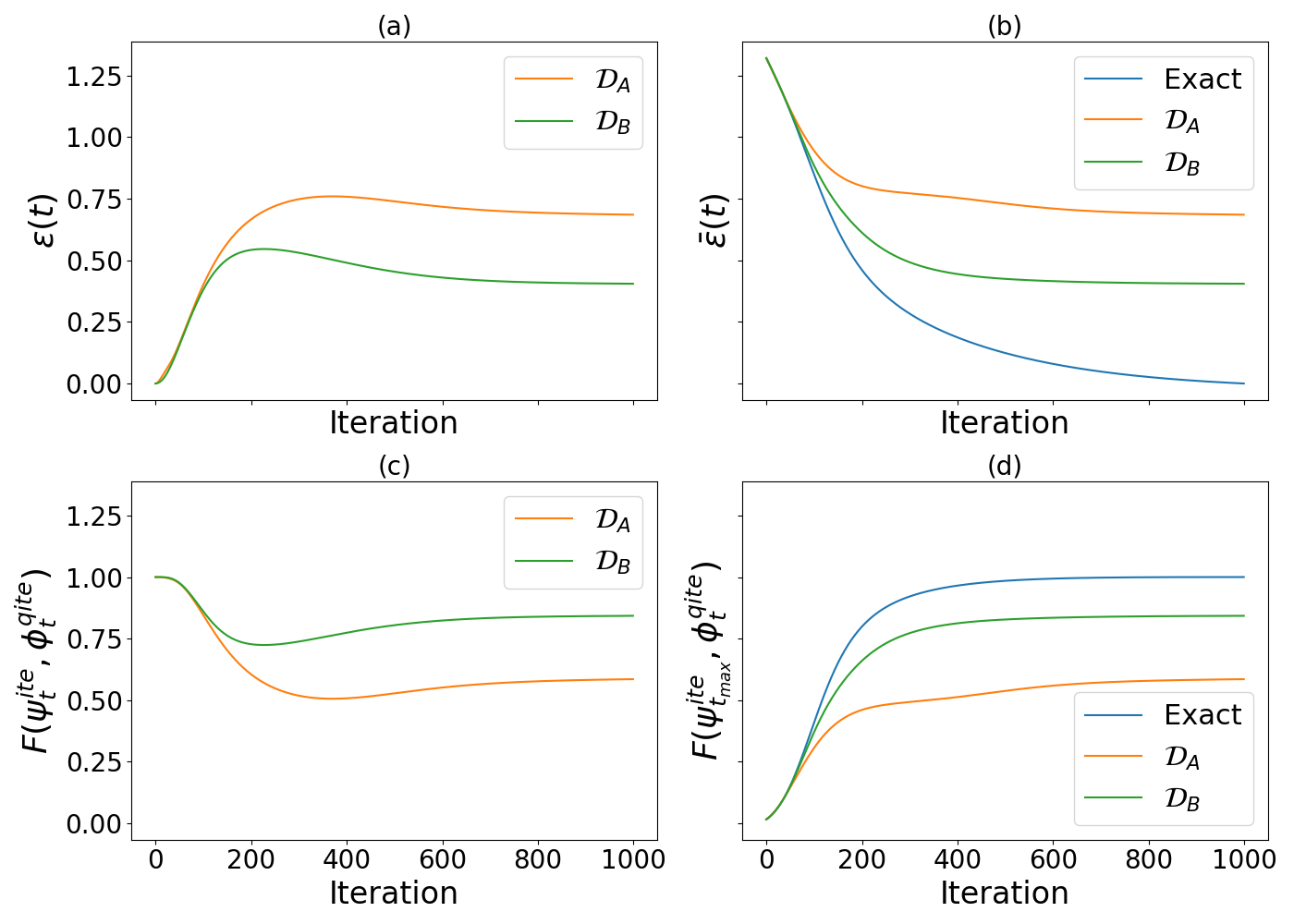} 
    \caption{Error and fidelity results for the 8-qubit graph instance of Fig. \ref{UD-MIS-graphs}b for different domains up to $n_{max}=1000$ iterations. (a) and (c) depict plots of $\varepsilon(t)$ (Eq. (\ref{eq:varepsilon})) and $F(\psi^{ite}_t,\phi^{qite}_t)=|\langle\psi^{ite}_t|\phi^{qite}_t\rangle|^2$, respectively. (b) and (d) depict plots of $\Bar{\varepsilon}(t)$ (Eq. (\ref{eq:Barepsilon})) and fidelity $F(\psi^{ite}_{t_{max}},\phi^{qite}_t)=|\langle\psi^{ite}_{t_{max}}|\phi^{qite}_t\rangle|^2$. The blue line corresponds to calculate the fidelity with respect to the ITE state $|\psi^{ite}_{t}\rangle$. }
    \label{nqubits-8-errors-and-fidelities-vs-it}
    \end{figure}

\begin{figure}[H]
    \centering
    \includegraphics[width=0.70\linewidth]{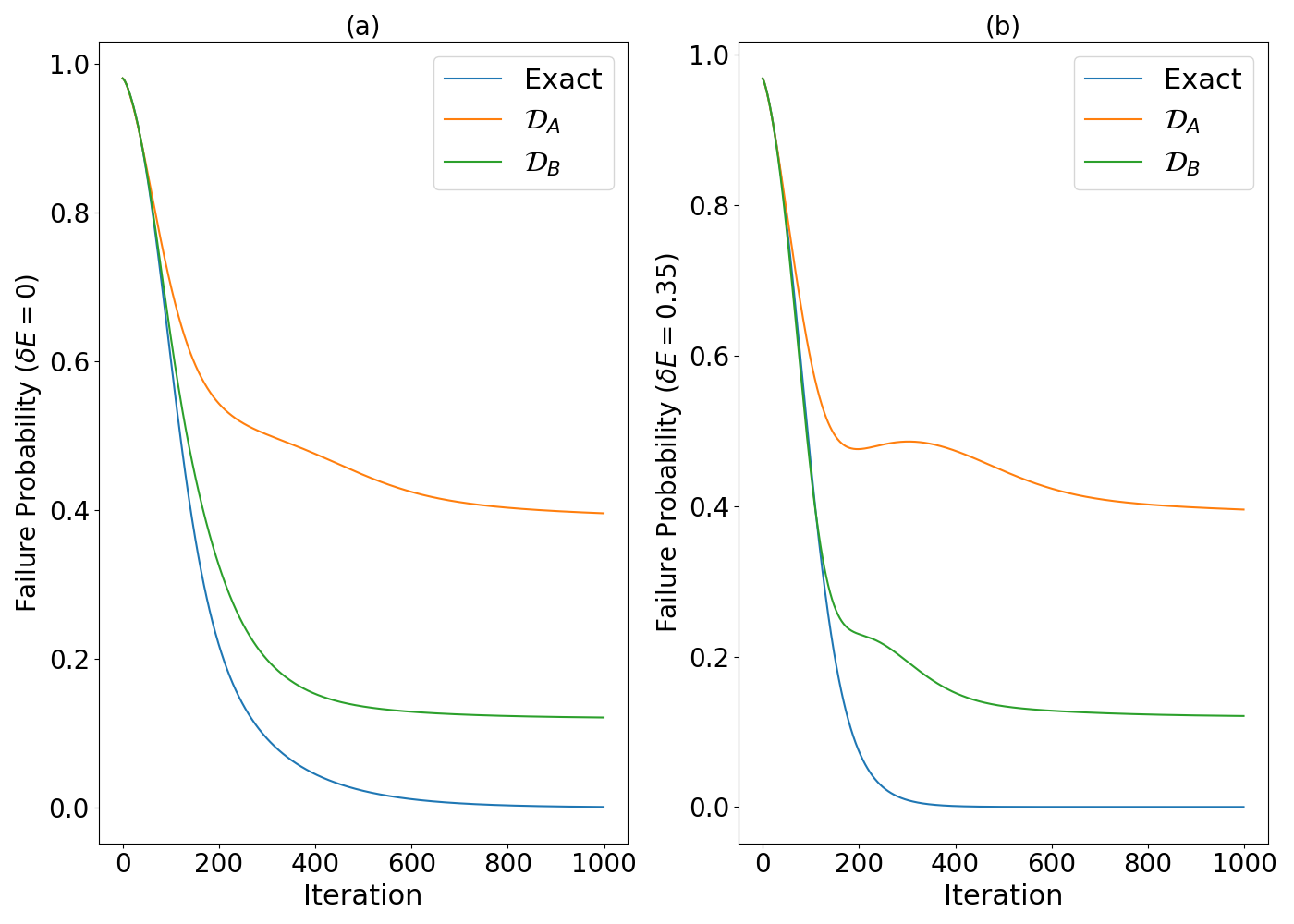}
    \caption{Failure probability for 8-qubit graph (Fig \ref{UD-MIS-graphs}b) up to $n_{max}=1000$ iterations. (a) $P_{F}^{ite}(t)$ (blue line) and $P_{F}^{qite}(t)$ for different domains and $\delta E=0$.  (b) Similar to (a) but $\delta E=0.35$. This value represents the difference between the energy ground-state $E_0=-3$ and the energy of the first-excited state $E_1=-2.65$. The degeneracy of the ground states is $g=4$ and for first-excited states is $g=3$.}

    \label{nqubits-8_probabilities_error_vs_iteration}
\end{figure}

\begin{figure}[H]
    \centering
    \includegraphics[width=0.70\linewidth]{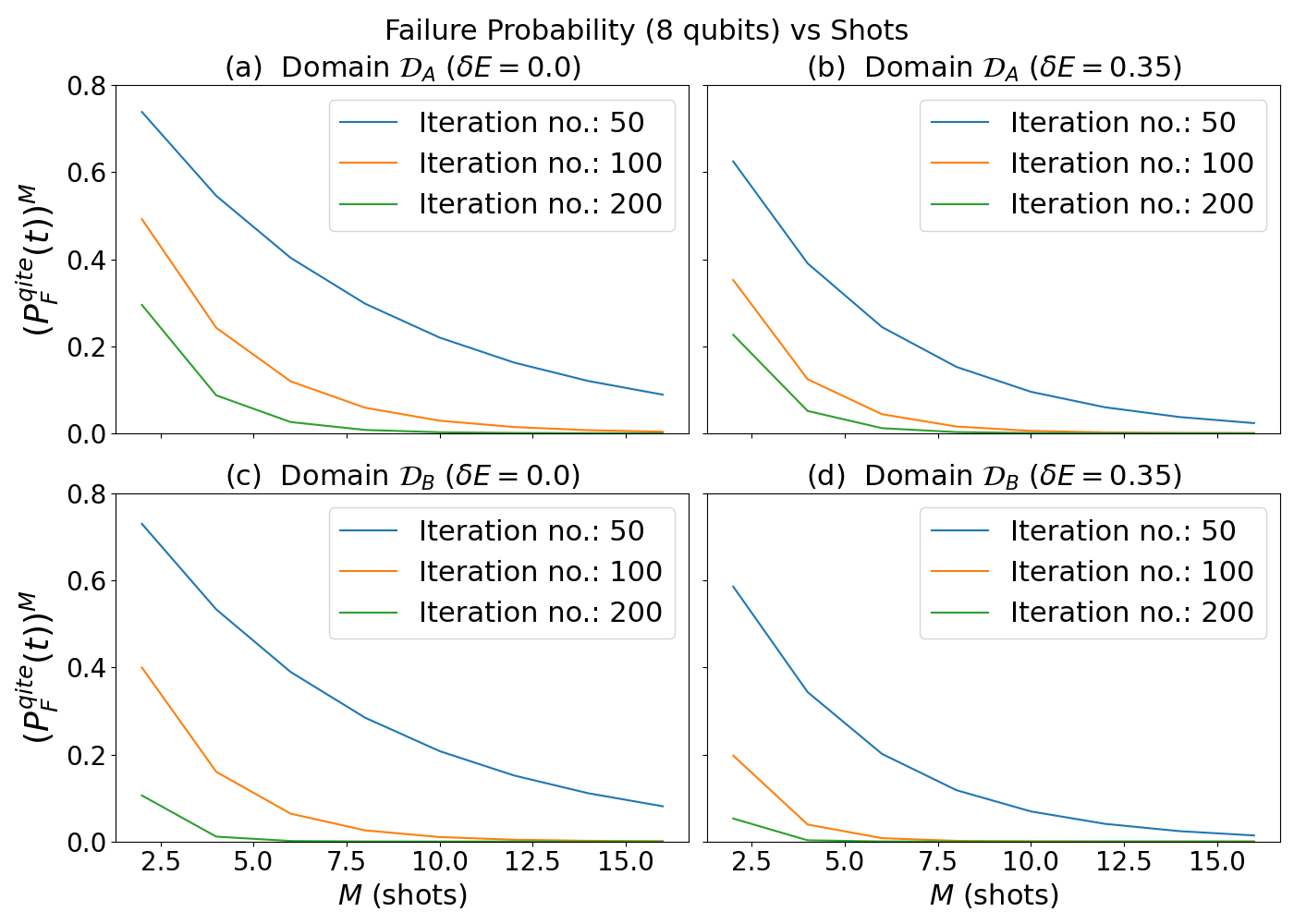}

    \caption{Failure probability $(P_{F}^{qite}(t))^M$  vs $M$ (number of shots) for 8-qubit graph for different iterations and domains.}
    \label{nqubits-8-proba_exacta_qite_numpy_corte_gs_y_fe_vs_shots}
\end{figure}

\begin{figure}[H]
    \centering
    \includegraphics[width=0.70\linewidth]{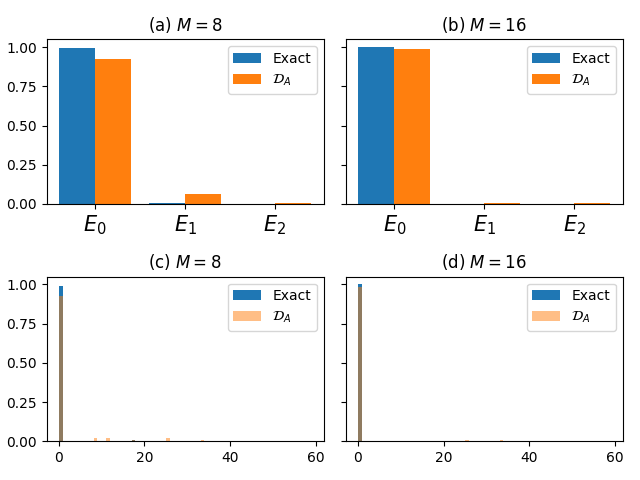}
    \caption{
    (a) and (b) show normalized histograms of obtained eigenvalues for 8-qubit graphs, with number of shots $M=8$ and $M=16$, respectively. (c) and (d) 
    show normalized histograms of relative errors for $M=8$ and $M=16$ shots. The number of UD-MIS instances to make these plots was around 150.}
    \label{nqubits-8-counts-vs-energy}
\end{figure}

\subsection{Results for 10-qubits graphs}\label{app:10qubits}

In this section we 
show the results for random graphs of 10 qubits. In this case we only consider 10 random graphs due to the numerical complexity of the calculations.
As in the other cases, $\D_A$ performs well enough in terms of the failure probability, but 
$\mathcal{D}_B$ performs better than $\mathcal{D}_A$ in matching the ITE state for large number of iterations. Moreover, it is observed that the failure probability
decreases when increasing the number of iterations, the number of shots, and the tolerable error. The number of iterations and shots do not need to be increased exponentially with the number of qubits.

\begin{figure}[H]
        \centering
        \includegraphics[width=0.70\textwidth]{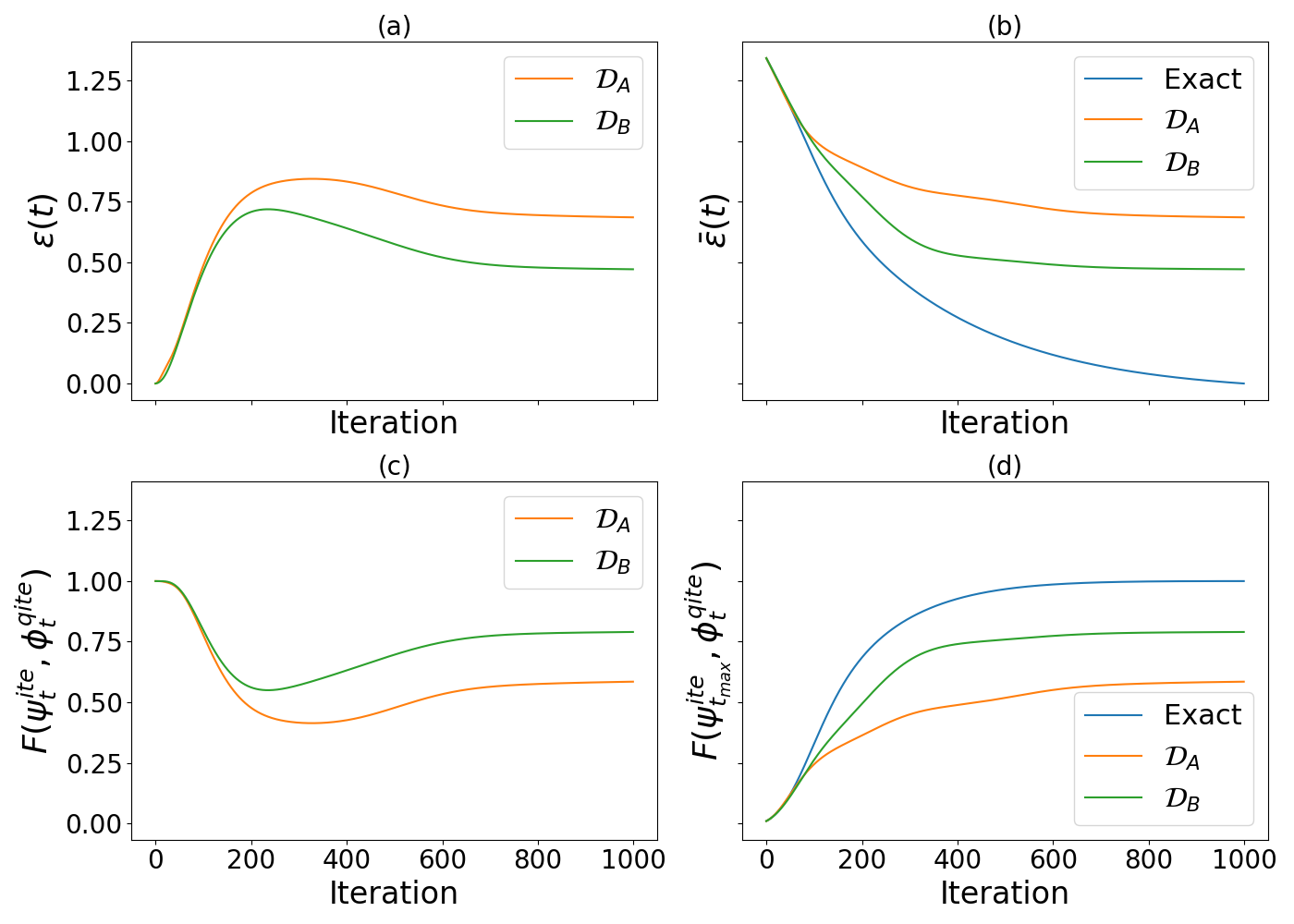} 
    \caption{Error and fidelity results for the 10-qubit graph instance of Fig. \ref{UD-MIS-graphs}c for different domains up to $n_{max}=1000$ iterations. (a) and (c) depict plots of $\varepsilon(t)$ (Eq. (\ref{eq:varepsilon})) and $F(\psi^{ite}_t,\phi^{qite}_t)=|\langle\psi^{ite}_t|\phi^{qite}_t\rangle|^2$ respectively. (b) and (d) depict plots of $\Bar{\varepsilon}(t)$ (Eq. \eqref{eq:Barepsilon}) and fidelity $F(\psi^{ite}_{t_{max}},\phi^{qite}_t)=|\langle\psi^{ite}_{t_{max}}|\phi^{qite}_t\rangle|^2$. The blue line corresponds to calculate the fidelity with respect to the ITE state $|\psi^{ite}_{t}\rangle$.
        }
    \label{nqubits-10-errors-and-fidelities-vs-it}
    \end{figure}

\begin{figure}[H]
    \centering
    \includegraphics[width=0.70\linewidth]{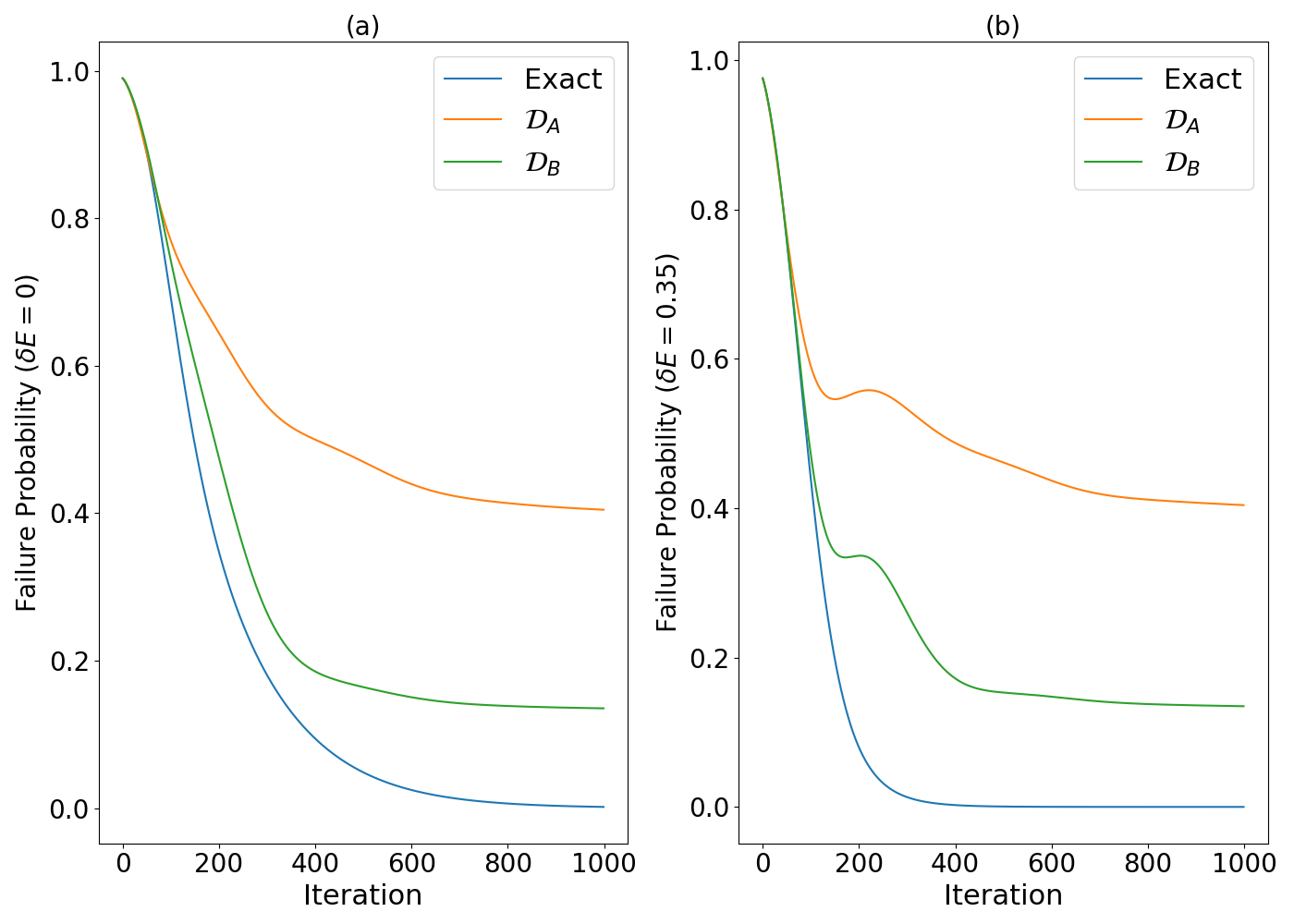}
   \caption{Failure probability for 10-qubit graph (Fig \ref{UD-MIS-graphs}c) up to $n_{max}=1000$ iterations. (a) $P_{F}^{ite}(t)$ (blue line) and $P_{F}^{qite}(t)$ for different domains and $\delta E=0$.  (b) Similar to (a) but $\delta E=0.35$. This value represents the difference between the energy ground-state $E_0=-3$ and the energy of the first-excited state $E_1=-2.65$. The degeneracy of the ground states is $g=9$ and for first-excited states is $g=15$.}
    \label{nqubits-10_probabilities_error_vs_iteration}
\end{figure}

\begin{figure}[H]
    \centering
    \includegraphics[width=0.70\linewidth]{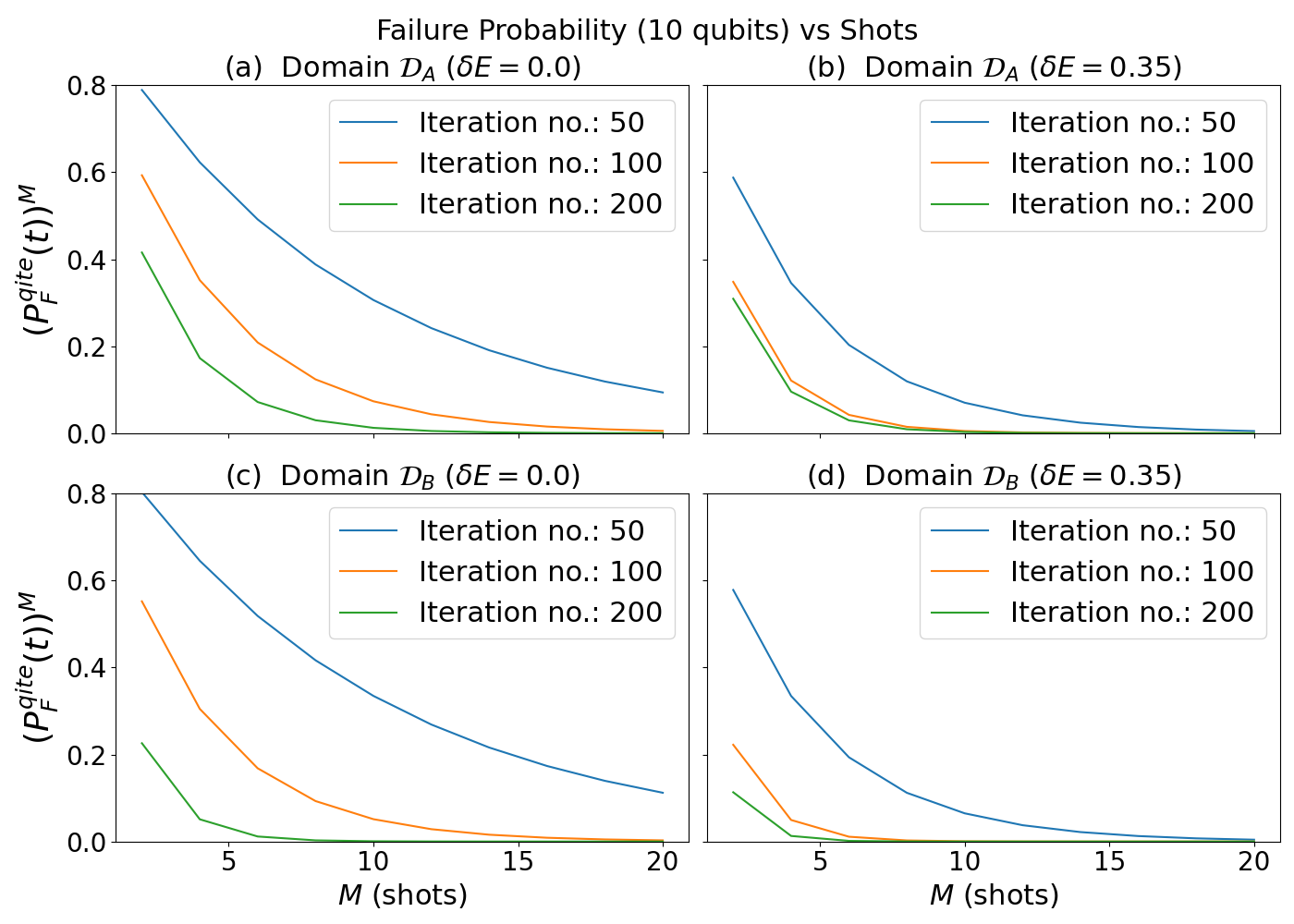}
    \caption{Failure probability $(P_{F}^{qite}(t))^M$  vs $M$ (number of shots) for 10-qubit graph for different number of iterations and domains.}
    \label{nqubits-10-proba_exacta_qite_numpy_corte_gs_y_fe_vs_shots}
\end{figure}

\begin{figure}[H]
    \centering
    \includegraphics[width=0.70\linewidth]{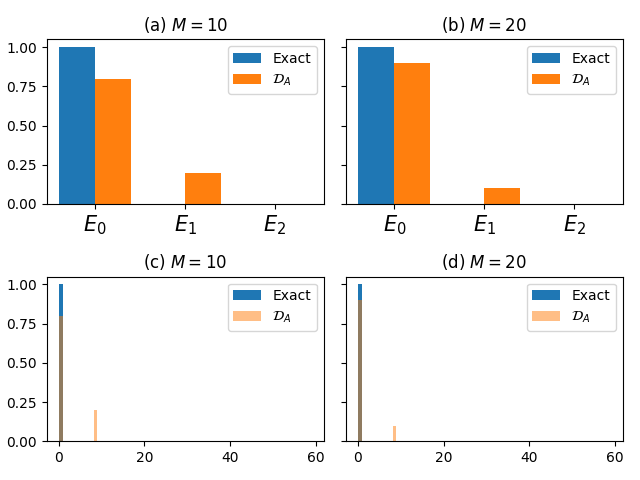}
    \caption{(a) and (b) show normalized histograms of obtained eigenvalues for 6-qubit graphs, with number of shots $M=10$ and $M=20$, respectively. (c) and (d) 
    show normalized histograms of relative errors for $M=10$ and $M=20$ shots. The number of UD-MIS instances to make these plots was 10.}
    \label{nqubits-10-counts-vs-energy}
\end{figure}

\bibliographystyle{ieeetr}
\bibliography{Referencias}

\end{document}